\newtheorem{definition}{Definition}
\newtheorem{proposition}[definition]{Proposition}
\newtheorem{lemma}[definition]{Lemma}
\newtheorem{theorem}[definition]{Theorem}
\newtheorem{corollary}[definition]{Corollary}
\newtheorem{conjecture}[definition]{Conjecture}
\newtheorem{remark}[definition]{Remark}
\newtheorem{example}[definition]{Example}
\newtheorem{question}[definition]{Question}
\newtheorem{memo}[definition]{Memo}
\def\squareforqed{\hbox{\rlap{$\sqcap$}$\sqcup$}}
\def\qed{\ifmmode\squareforqed\else{\unskip\nobreak\hfil
\penalty50\hskip1em\null\nobreak\hfil\squareforqed
\parfillskip=0pt\finalhyphendemerits=0\endgraf}\fi}
\def\endenv{\ifmmode\;\else{\unskip\nobreak\hfil
\penalty50\hskip1em\null\nobreak\hfil\;
\parfillskip=0pt\finalhyphendemerits=0\endgraf}\fi}
\newenvironment{proof}{\noindent \textbf{{Proof.~} }}{\qed}
\def\Dbar{\leavevmode\lower.6ex\hbox to 0pt
{\hskip-.23ex\accent"16\hss}D}
\def\url@leostyle{%
  \@ifundefined{selectfont}{\def\UrlFont{\sf}}{\def\UrlFont{\small\ttfamily}}}
\def\bcj{\begin{conjecture}}
\def\ecj{\end{conjecture}}
\def\bcr{\begin{corollary}}
\def\ecr{\end{corollary}}
\def\bd{\begin{definition}}
\def\ed{\end{definition}}
\def\bea{\begin{eqnarray}}
\def\eea{\end{eqnarray}}
\def\beq{\begin{equation}}
\def\eeq{\end{equation}}
\def\bal{\begin{aligned}}
\def\eal{\end{aligned}}
\def\bem{\begin{enumerate}}
\def\eem{\end{enumerate}}
\def\bex{\begin{example}}
\def\eex{\end{example}}
\def\bim{\begin{itemize}}
\def\eim{\end{itemize}}
\def\bl{\begin{lemma}}
\def\el{\end{lemma}}
\def\bma{\begin{bmatrix}}
\def\ema{\end{bmatrix}}
\def\bpf{\begin{proof}}
\def\epf{\end{proof}}
\def\bpp{\begin{proposition}}
\def\epp{\end{proposition}}
\def\bqu{\begin{question}}
\def\equ{\end{question}}
\def\br{\begin{remark}}
\def\er{\end{remark}}
\def\bt{\begin{theorem}}
\def\et{\end{theorem}}
\def\bmm{\begin{memo}}
\def\emm{\end{memo}}
\def\btb{\begin{tabular}}
\def\etb{\end{tabular}}
\newcommand{\nc}{\newcommand}
\def\a{\alpha}
\def\b{\beta}
\def\g{\gamma}
\def\t{\theta}
\def\l{\lambda}
\def\r{\rho}
\def\s{\sigma}
\def\og{\omega}
\def\L{\Lambda}
\nc{\bbA}{\mathbb{A}} \nc{\bbB}{\mathbb{B}} \nc{\bbC}{\mathbb{C}}
 \nc{\bbD}{\mathbb{D}} \nc{\bbE}{\mathbb{E}} \nc{\bbF}{\mathbb{F}}
 \nc{\bbG}{\mathbb{G}} \nc{\bbH}{\mathbb{H}} \nc{\bbI}{\mathbb{I}}
 \nc{\bbJ}{\mathbb{J}} \nc{\bbK}{\mathbb{K}} \nc{\bbL}{\mathbb{L}}
 \nc{\bbM}{\mathbb{M}} \nc{\bbN}{\mathbb{N}} \nc{\bbO}{\mathbb{O}}
 \nc{\bbP}{\mathbb{P}} \nc{\bbQ}{\mathbb{Q}} \nc{\bbR}{\mathbb{R}}
 \nc{\bbS}{\mathbb{S}} \nc{\bbT}{\mathbb{T}} \nc{\bbU}{\mathbb{U}}
 \nc{\bbV}{\mathbb{V}} \nc{\bbW}{\mathbb{W}} \nc{\bbX}{\mathbb{X}}
 \nc{\bbZ}{\mathbb{Z}}
 \nc{\bA}{{\bf A}} \nc{\bB}{{\bf B}} \nc{\bC}{{\bf C}}
 \nc{\bD}{{\bf D}} \nc{\bE}{{\bf E}} \nc{\bF}{{\bf F}}
 \nc{\bG}{{\bf G}} \nc{\bH}{{\bf H}} \nc{\bI}{{\bf I}}
 \nc{\bJ}{{\bf J}} \nc{\bK}{{\bf K}} \nc{\bL}{{\bf L}}
 \nc{\bM}{{\bf M}} \nc{\bN}{{\bf N}} \nc{\bO}{{\bf O}}
 \nc{\bP}{{\bf P}} \nc{\bQ}{{\bf Q}} \nc{\bR}{{\bf R}}
 \nc{\bS}{{\bf S}} \nc{\bT}{{\bf T}} \nc{\bU}{{\bf U}}
 \nc{\bV}{{\bf V}} \nc{\bW}{{\bf W}} \nc{\bX}{{\bf X}}
 \nc{\bZ}{{\bf Z}}
\nc{\cA}{{\cal A}} \nc{\cB}{{\cal B}} \nc{\cC}{{\cal C}}
\nc{\cD}{{\cal D}} \nc{\cE}{{\cal E}} \nc{\cF}{{\cal F}}
\nc{\cG}{{\cal G}} \nc{\cH}{{\cal H}} \nc{\cI}{{\cal I}}
\nc{\cJ}{{\cal J}} \nc{\cK}{{\cal K}} \nc{\cL}{{\cal L}}
\nc{\cM}{{\cal M}} \nc{\cN}{{\cal N}} \nc{\cO}{{\cal O}}
\nc{\cP}{{\cal P}} \nc{\cQ}{{\cal Q}} \nc{\cR}{{\cal R}}
\nc{\cS}{{\cal S}} \nc{\cT}{{\cal T}} \nc{\cU}{{\cal U}}
\nc{\cV}{{\cal V}} \nc{\cW}{{\cal W}} \nc{\cX}{{\cal X}}
\nc{\cZ}{{\cal Z}}
\nc{\hA}{{\hat{A}}} \nc{\hB}{{\hat{B}}} \nc{\hC}{{\hat{C}}}
\nc{\hD}{{\hat{D}}} \nc{\hE}{{\hat{E}}} \nc{\hF}{{\hat{F}}}
\nc{\hG}{{\hat{G}}} \nc{\hH}{{\hat{H}}} \nc{\hI}{{\hat{I}}}
\nc{\hJ}{{\hat{J}}} \nc{\hK}{{\hat{K}}} \nc{\hL}{{\hat{L}}}
\nc{\hM}{{\hat{M}}} \nc{\hN}{{\hat{N}}} \nc{\hO}{{\hat{O}}}
\nc{\hP}{{\hat{P}}} \nc{\hR}{{\hat{R}}} \nc{\hS}{{\hat{S}}}
\nc{\hT}{{\hat{T}}} \nc{\hU}{{\hat{U}}} \nc{\hV}{{\hat{V}}}
\nc{\hW}{{\hat{W}}} \nc{\hX}{{\hat{X}}} \nc{\hZ}{{\hat{Z}}}
\nc{\hn}{{\hat{n}}}
\def\diag{\mathop{\rm diag}}
\def\dim{\mathop{\rm Dim}}
\def\max{\mathop{\rm max}}
\def\tr{\mathop{\rm Tr}}
\def\dg{\dagger}
\newcommand{\bra}[1]{\langle#1|}
\newcommand{\ket}[1]{|#1\rangle}
\newcommand{\braket}[2]{\langle#1|#2\rangle}
\def\Dbar{\leavevmode\lower.6ex\hbox to 0pt
{\hskip-.23ex\accent"16\hss}D}
\begin{document}

\large

\title{Controlled remote implementation of operations via graph states}

\author{Xinyu Qiu}\email[]{xinyuqiu@buaa.edu.cn}
\affiliation{LMIB(Beihang University), Ministry of education, and School of Mathematical Sciences, Beihang University, Beijing 100191, China}
\author{Lin Chen}\email[]{linchen@buaa.edu.cn (corresponding author)}
\affiliation{LMIB(Beihang University), Ministry of education, and School of Mathematical Sciences, Beihang University, Beijing 100191, China}
\affiliation{International Research Institute for Multidisciplinary Science, Beihang University, Beijing 100191, China}

\begin{abstract}
We propose protocols for controlled remote implementation of operations  with convincing control power. Sharing a $(2N+1)$-partite graph state, $2N$ participants collaborate to prepare the stator and realize the operation $\otimes_{j=1}^N\exp{[i\alpha_j\sigma_{n_{O_j}}]}$ on $N$ unknown states for distributed systems $O_j$, with the permission of a controller.
 All the implementation requirements of our protocol can be satisfied by means of local operations and classical communications, and the experimental feasibility is presented according to current techniques. We characterize the entanglement requirement of our protocol in terms of geometric measure of entanglement. It turns out to be economic to realize the control function from the perspective of entanglement cost.  Further we show that the control power of our protocol is reliable by positive operator valued measurement.
\end{abstract}
\maketitle

\section{Introduction}
As the unique resource, entanglement allows the emergence and development of quantum information processing, such as teleportation \cite{bennett1993teleport}, dense coding \cite{bennett1992communication} and  cryptography \cite{gisin2002quantum,portmann2022security}. Being expected to offer substantial speed-ups over classical counterparts, quantum computation has been paid a lot of attention \cite{deutsch1992rapid,shor1997polynomial}. Several challenges have surfaced in its actual construction, such as decoherence and dissipation, sufficiently manipulating a large number of qubits, and undesirable interactions \cite{alexeew2021quantum}.   To counter such challenges,  distributed quantum computation has been proposed \cite{cirac1999distributed,serafini2006distributed,cacciapuoti2020quantum}. It requires to transfer states from one place to the other and implement the operations on a remote state faithfully. The first requirement has been met by quantum teleportation \cite{bennett1993teleport,pan1997experimental, georgescu202225years, qiu2022quantum}. The second one has been tackled by remote implementation of operation (RIO) and controlled RIO (CRIO).  RIO means that the quantum operation performed on the sender's local system is able to act on an unknown state of a remote system that belongs to the receiver \cite{huelga2001quantum,huelga2002remote}. Then CRIO was proposed by extending RIO to multipartite case with controller \cite{wang2007combined}. The idea of that is to implement remote operations, but only with the permission of controller. It can definitely enhance the security of RIO.  Both of RIO and CRIO play an important role not only in distributed quantum computation, but also other tasks in remote quantum information processing such as  programming \cite{nielsen2000quantum}, operation sharing \cite{wang2013determistic} and remote state preparation \cite{bennett2001remote}. They are realized by local operation and classical communication (LOCC)  and consume  entanglement resource.  Some works concerning RIO have been presented and interesting progress has been made both theoretically \cite{reznik2002remote, chen2005probabilistic,wang2006remote,an2022joint} and experimentally \cite{guo2005teleporting, huelga2005remote, bhaskar2020experimental}. 
As for CRIO, it has been realized in terms of partially unknown quantum operations \cite{wang2007combined, he2014bidirectional}, various classes of bipartite unitary operations \cite{yu2016implementation}, arbitrary dimensional controlled phase gate \cite{gong2021control}, and operators on different remote photon
states \cite{an2022controlled}. Entangled states including Bell,  Greenberger-Horne-Zeilinger (GHZ), and five-qubit cluster states are employed as  channels in these protocols. Theoretically, many of these protocols can hardly  be scalable, or technically complicated. What's more, the highly entangled states they employed are susceptible to noise, so it is a great challenge to realize them  under experimental techniques.   

In this paper,  we propose CRIO protocols for the operations $\exp{[i\a\s_{n}]}$ with convincing control power. 
The diagram of our protocol via a graph state $\ket{h_3}$ is shown in FIG. \ref{figure:tripartite}. 
We generalize the protocol realizing RIO by stators \cite{reznik2002remote} into CRIO.  The stators in our protocol are constructed from shared graph states by LOCC, only with the permission of the controller. The protocol via a $(2N+1)$-partite graph state can be obtained by generalization, and that is used to implement remote operations on $N$ unknown states.  Graph states are employed as the channel in our protocol for three reasons. Firstly, they make it possible to realize control function and enable our protocol to be scalable. Secondly, they are a natural resource for much of quantum information, and  known to be most readily available multipartite resource in the laboratory \cite{bell2014experimental,yang2022sequential}. Thirdly, many of the graph states show advantages for being robust to noise \cite{beriegel2001persistent}. 
 All the implementation requirements of our protocol can be satisfied by means of LOCC. The experimental feasibility is presented in terms of current techniques. The local operations and measurements can be implemented by a diamond nanophotonic resonator containing SiV quantum memory  with an integrated microwave stripline \cite{bhaskar2020experimental}. 
 The entanglement requirement of our protocol is characterized in terms of geometric measure of entanglement. It is presented in Proposition \ref{pro:GM}. Compared with the former protocol in \cite{reznik2002remote}, ours is endowed with control function, while it only requires the same entanglement resource. Hence it is economic to realize the control function from the perspective of entanglement cost.  Besides, we analyze the control power in our protocol, and obtain that without the permission of the controller, other participants can hardly realize the remote implementation of operation. It is shown in Proposition \ref{pro:controlpower}.  Thus the control power of our protocol is convincing.  Our protocol shows advantage in stronger security, extensive applications, and advanced efficiency. It can contribute to improving the ability of distributed quantum computing and stimulate more research work on quantum information processing.

\begin{figure}[htp]
	\center{\includegraphics[width=8cm]  {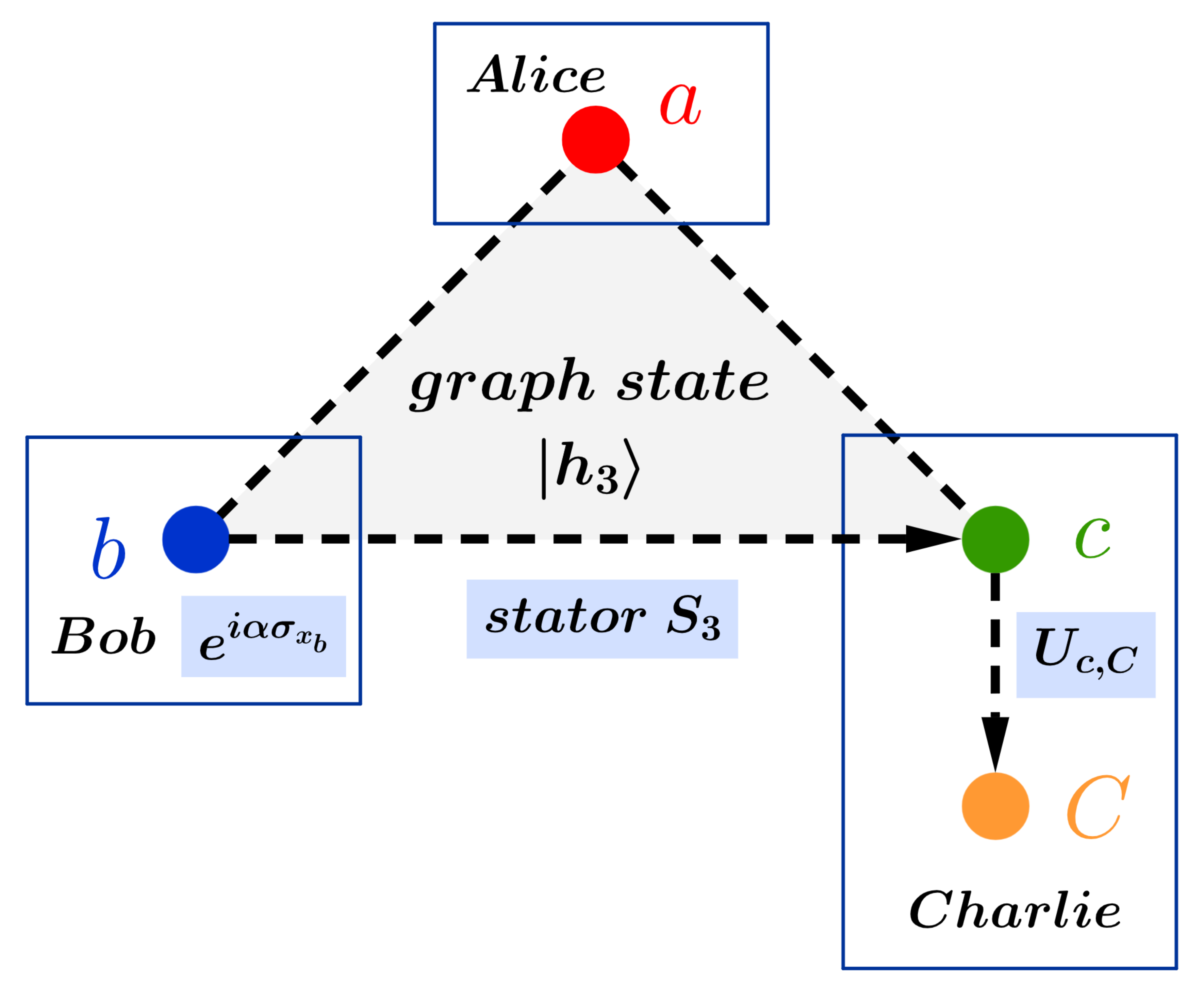}}
	\caption{Diagram showing CRIO via a graph state $\ket{h_3}_{a,b,c}$ in (\ref{def:stateh3}). Qubits $a,b,c$ belong to Alice, Bob and Charlie, respectively. These participants locate in distributed places and share the entangled state $\ket{h_3}$. Only with the permission of controller Alice, then Bob and Charlie can construct the stator by local measurements and operation $U_{c,C}$ etc. Then with the help of stator $S_3$, Bob can implement the remote operation $e^{i\a\s_{n_C}}$ on an unknown state for Charlie's system $C$ by locally implementing $e^{i\a\s_{x_b}}$.   }
	\label{figure:tripartite}
\end{figure}

Graph states have a strong connection with quantum computation. Naturally, they can be characterized by geometric measure of entanglement (GM), a well-known entanglement measure for multipartite systems.  GM not only provides a simple geometric picture, but also has significant operational meanings. It has connections with optimal entanglement witnesses \cite{wei2003geometric}, and multipartite state discrimination under LOCC \cite{hayashi2006bounds}. As one of the most widely used entanglement measures for the multipartite states, GM fulfills all the desired properties of an entanglement monotone \cite{wei2003geometric}. It has been utilized to determine the universality of resource states for one-way quantum computation \cite{nest2007fundamentals}. It also has been employed to show that most entangled states are too entangled  to be useful as computational resources \cite{gross2009most}.

The rest of this paper is organized as follows. In Sec. \ref{sec:preliminaries}, we introduce some basic concepts of graph states, GM and stator. Then we simply recall the deduction of eigenoperator equation of the stator. In Sec. \ref{sec:CRIO}, we propose our CRIO protocol.  We show the implementation of remote operation on an unknown state for a single system via a tripartite graph state in Sec. \ref{sec:tripartite}. The protocol via a five-partite graph state  is presented in Sec. \ref{sec:fivepartite}, which is slightly different from the former one  and used to implement operations on two remote systems. Then we generalize it into the one via a $(2N+1)$-partite graph state in Sec. \ref{sec:2n+1partite}. We show GM of the graph states used in our protocols in Sec. \ref{sec:calculateGM}. We do the control power analysis in Sec. \ref{sec:POVM}, and exhibit the experimental feasibility of our protocol in Sec. \ref{sec:experimental}. Finally, we conclude in Sec. \ref{sec:conclusion}.

\section{Preliminaries}
\label{sec:preliminaries}
In this section, we recall the definitions and some properties of graph states, GM and stator.  In Sec. \ref{sec:graphstate}, we recall the definition of graph states, and demonstrate a graph state with the help of quantum gates. In Sec. \ref{sec:geometricmeasure}, we recall the definition of GM and show a lemma we employ to characterize the GM of  graph states. In Sec. \ref{sec:stator}, we introduce the concept of stator, and present the eigenoperator equation of the stator used in this paper.
\subsection{Graph states}
\label{sec:graphstate}
A graph is a pair $G=(V,E)$, where $V$ is the set of vertices and $E\subset[V]^2$ is the set of edges. With each graph, a graph state is associated. An axiomatic framework for mapping graphs to quantum states is proposed in  \cite{ionicioiu2012encoding}. A graph state is a certain pure state on a Hilbert space $\mathcal{H}_2^{\otimes V}$. Each vertex of the graph labels a qubit.  Each vertex $a\in V$ of the graph $G=(V,E)$ is attached to a Hermitian operator
\begin{eqnarray}
K_G^{(a)}=\s_{x}^{(a)}\prod_{b\in N_a}\s_z^{(b)}.
\end{eqnarray} 
Here $\s_{x}^{(a)}$ and $\s_{z}^{(a)}$ are the Pauli matrices and the upper index specifies the Hilbert space on which the operator acts. $K_G^{(a)}$ is an observable of the qubits related to vertex $a$ and all of its neighbors $b\in N_a$. There are $N=|V|$ operators in the set $\{K_G^{(a)}\}_{a\in V}$. The operators in this set are all commute. 

We demonstrate a graph state with the help of Hadamard gate $H$ and two-qubit controlled-Z gate $CZ_{(j_1,j_2)}$, where
\begin{eqnarray}
	\label{def:Hgate}
	H=\frac{1}{\sqrt{2}}\bma1&1\\1&-1\ema,
\end{eqnarray}
and the gate
\begin{eqnarray}
	\label{def:CZgate}
CZ_{(j_1,j_2)}=\diag\{1,1,1,-1\}
\end{eqnarray}
denotes the gate with control qubit $j_1$ and controlled qubit $j_2$.
By the two gates, we can show the preparation of graph states in the quantum circuit conveniently.

A graph state $\ket{H_n}$  is created from a graph $G=\{V,E\}$ of $n$ vertices by assigning a qubit to each vertex and initializing them by applying the Hadamard gate on each qubit. Let $\ket{+}=(\ket{0}+\ket{1})/\sqrt{2}$. If two vertices $j_1,j_2\in V$ are connected by an edge $e\in V$, then we perform $CZ_{(j_1,j_2)}$ over the initialized $n$-qubit state $\ket{+}^{\otimes n}$. By implementing all the controlled-Z gates corresponding the edges $e\in E$, we obtain the graph state 
\begin{eqnarray}
\ket{H_n}=\prod_{e\in E}CZ_e\ket{+}^{\otimes n}.
\end{eqnarray}

Graph states are useful resources with applications spanning many aspects of quantum information processing, such as computation \cite{raussendorf2001a}, cryptography \cite{qian2012quantum}, quantum error correction \cite{liao2022topological} and  networks \cite{cuquet2012growth}. Experimentally, different techniques have been studied to implement graph states including ion traps \cite{barreiro2011An}, superconducting qubits \cite{song201710qubit}, and continuous variable optics \cite{walschaers2018tailoring}.

\subsection{Geometric measure of entanglement}
\label{sec:geometricmeasure}
 Geometric measure of entanglement (GM) is a well-known entanglement measure for multipartite systems \cite{wei2003geometric}. It measures the closest distance in terms of overlap between a given state and the set of separable states, or the set of pure product states. Originally introduced for pure bipartite states, GM was subsequently generalized to multipartite and to mixed states. Several  inequivalent definitions of GM has surfaced by now. In this paper, we shall follow the definition given in (\ref{def:GM}) and (\ref{def:GM2}).
\begin{eqnarray}
	\label{def:GM}
	\L^2(\r):=&&\max_{\s\in SEP}\tr(\r\s)=\max_{\ket{\varphi}\in PRO}\bra{\varphi}\r\ket{\varphi},\\
	\label{def:GM2}
	G(\r):=&&-2\log\L(\r).
\end{eqnarray}
Here SEP denotes the separable states and PRO denotes the fully pure product states in the Hilbert space $\otimes^{N}_{j=1}\mathcal{H}_j$. GM is known only for a few examples, such as bipartite pure states, GHZ-type states, antisymmetric basis states, pure symmetric three-qubit states and some graph states \cite{hayashi2008entanglement,markham2007entanglement,chen2010computation}.

  We show the following fact given in Ref. \cite{zhu2011additivity}. It is a useful lemma concerning the closest product states of non-negative states.  Here the \textit{closest product state}  denotes any pure product state maximizing (\ref{def:GM}). The \textit{non-negative state}  means that all its entries in the computational basis are non-negative. 
\begin{lemma}
	\label{lemma}
	The closest product state to a non-negative state $\r$ can be chosen to be non-negative.
\end{lemma}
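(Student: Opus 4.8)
The plan is to prove the statement by a direct averaging argument: start from an arbitrary closest product state, and show that replacing each of its local factors by a factor with non-negative entries can only increase (or preserve) the overlap $\bra{\varphi}\r\ket{\varphi}$, so that a non-negative maximizer exists. Throughout I write $\r=\sum_{\bf i,\bf j}\r_{\bf i,\bf j}\ketbra{\bf i}{\bf j}$ in the computational basis, where ${\bf i}=(i_1,\dots,i_N)$ ranges over multi-indices, and the hypothesis is $\r_{\bf i,\bf j}\ge 0$ for all ${\bf i,\bf j}$. A pure product state is $\ket{\varphi}=\otimes_{k=1}^N\ket{\varphi_k}$ with $\ket{\varphi_k}=\sum_{i_k}c^{(k)}_{i_k}\ket{i_k}$.

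First I would expand the overlap explicitly:
\begin{eqnarray}
\bra{\varphi}\r\ket{\varphi}=\sum_{\bf i,\bf j}\r_{\bf i,\bf j}\,\overline{c^{(1)}_{i_1}\cdots c^{(N)}_{i_N}}\;c^{(1)}_{j_1}\cdots c^{(N)}_{j_N}.
\end{eqnarray}
Since every coefficient $\r_{\bf i,\bf j}$ is real and non-negative, I would bound each summand using $\abs{\overline{z}w}=\abs{z}\abs{w}$, giving
\begin{eqnarray}
\bra{\varphi}\r\ket{\varphi}\le\sum_{\bf i,\bf j}\r_{\bf i,\bf j}\,\abs{c^{(1)}_{i_1}}\cdots\abs{c^{(N)}_{i_N}}\;\abs{c^{(1)}_{j_1}}\cdots\abs{c^{(N)}_{j_N}}.
\end{eqnarray}
The right-hand side is precisely $\bra{\tilde\varphi}\r\ket{\tilde\varphi}$, where $\ket{\tilde\varphi}=\otimes_k\ket{\tilde\varphi_k}$ is the product state obtained by replacing each amplitude $c^{(k)}_{i_k}$ with its modulus $\abs{c^{(k)}_{i_k}}$. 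Each $\ket{\tilde\varphi_k}$ is still a normalized local vector (taking moduli preserves the norm), so $\ket{\tilde\varphi}\in PRO$ and it is non-negative by construction.

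Combining the two displays, $\bra{\varphi}\r\ket{\varphi}\le\bra{\tilde\varphi}\r\ket{\tilde\varphi}$. If $\ket{\varphi}$ is a closest product state, i.e. it attains the maximum in (\ref{def:GM}), then $\bra{\tilde\varphi}\r\ket{\tilde\varphi}$ is at least as large and hence also attains the maximum; therefore the non-negative state $\ket{\tilde\varphi}$ is itself a closest product state, which is exactly the assertion. The argument is essentially routine once the modulus trick is in place; the only subtlety worth flagging is the role of the non-negativity of $\r$, since it is what permits pulling the moduli inside the sum without any cancellation being lost. A reader might expect the main obstacle to be edge cases where the maximizer is non-unique or where some local amplitudes vanish, but neither causes trouble: the inequality above is valid for every product state regardless of uniqueness, and vanishing amplitudes simply map to vanishing non-negative amplitudes. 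Thus I anticipate no genuine difficulty beyond stating the modulus bound carefully.
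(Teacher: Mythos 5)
Your argument is correct and is essentially the standard modulus argument used in Lemma~8 of the cited reference \cite{zhu2011additivity}, which is all the paper itself offers in place of a proof. The only cosmetic point is that the individual summands $\r_{\bf i,\bf j}\,\overline{c^{(1)}_{i_1}\cdots c^{(N)}_{i_N}}\,c^{(1)}_{j_1}\cdots c^{(N)}_{j_N}$ are complex, so the termwise bound should be phrased as $\bra{\varphi}\r\ket{\varphi}=\sum_{\bf i,\bf j}\r_{\bf i,\bf j}\,\mathrm{Re}\bigl(\overline{c^{(1)}_{i_1}\cdots c^{(N)}_{i_N}}\,c^{(1)}_{j_1}\cdots c^{(N)}_{j_N}\bigr)\le\sum_{\bf i,\bf j}\r_{\bf i,\bf j}\,\abs{c^{(1)}_{i_1}}\cdots\abs{c^{(N)}_{j_N}}$ (valid because the left-hand side is real and each $\r_{\bf i,\bf j}\ge 0$), after which your conclusion follows exactly as stated.
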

The proof is presented in Lemma 8 of Ref. \cite{zhu2011additivity}. This lemma can be used to characterize GM of the states that are non-negative or locally equivalent to non-negative states. In addition, it contributes to prove the strong additivity of GM of the states including  Bell diagonal states, maximally correlated generalized Bell diagonal states, isotropic states, generalized Dicke states, mixture of Dicke states, the Smolin state and D\"ur's multipartite entangled states.

Since the graph states we employed in this paper are all locally equivalent to non-negative states, we use Lemma \ref{lemma} to investigate GM of them. The calculation is presented in Sec. \ref{sec:calculateGM}.

\subsection{Stator and eigenoperator equation}
\label{sec:stator}
The stator, a hybrid state operator,  is an object that expresses quantum correlations between states of one participant and operators of the other participant. It is firstly proposed in Ref. \cite{reznik2002remote} to implement a class of operations on remote systems. Given a well-prepared stator, the operation on Bob's system is remotely brought about by Alice's local operations. The desired operation is determined by Alice and unknown to Bob. Hence it demonstrates advantage in terms of security. 

A stator $S_{AB}$ shared by remote observers Alice and Bob is in the space
\begin{eqnarray}
S_{AB}\in  \{\mathcal{H}_A\otimes O(\mathcal{H}_B)\},
\end{eqnarray} 
where $\mathcal{H}_A$ and $\mathcal{H}_B$ are the Hilbert spaces of Alice and Bob respectively, and $O(\mathcal{H}_B)$ denotes the operators acting on an arbitrary state in $\mathcal{H_B}$.
A stator has the general form 
\begin{eqnarray}
S_{AB}=\sum_{s=1}^{N_A}\sum_{t=1}^{N_B^2}c_{st}\ket{s}\otimes B_t,
\end{eqnarray}
where $N_A=\dim(\mathcal{H}_A)$, $N_B=\dim(\mathcal{H}_B)$, $\ket{s}\in \mathcal{H}_A$, $B_t$ acts on states in $\mathcal{H}_B$ and $c_{s,t}$ are complex numbers.
For each stator, an eigenoperator equation can be constructed. We consider the   following stator  used in this paper,
\begin{eqnarray}
S=\ket{0}_A\otimes I_B+\ket{1}_A\otimes\s_{n_B}.
\end{eqnarray}
Here $\s_{n_B}=\overline{n_B}\cdot\overline{\s}\in O(\mathcal{H}_B)$, where $\overline{n_B}=[x,y,z]$ is the axis vector and $\overline{\s}=[\s_x,\s_y,\s_z]$ is the Pauli matrix vector. The states $\ket{0_A},\ket{1_A}\in \mathcal{H}_A$ are the eigenstates of $\s_{z_A}$.  One can verify that $\s^2_{n_B}=I$. Obviously, $S$ satisfies the eigenoperator equation 
\begin{eqnarray}
\s_{x_A}S=\s_{n_B}S.
\end{eqnarray}
Thus for any analytic function $f$, it also satisfies that
\begin{eqnarray}
f(\s_{x_A})S=f(\s_{n_B})S,
\end{eqnarray}
and particularly,
\begin{eqnarray}
	\label{eq:firststator}
e^{i\a\s_{x_A}}S=e^{i\a\s_{n_B}}S,
\end{eqnarray}
where $\a$ is any real number determined by Alice. Using the stator, a unitary operation on Alice's qubit gives rise to a similar unitary operation acting on Bob's system, which is remote to Alice. The construction of stator and the implementation of remote operations are both realized by LOCC.

\section{Controlled remote implementation of operations}
\label{sec:CRIO}
In this section, we show the implementation of controlled remote operations via a graph state. The $2N+1$ participants $A_1,A_2,...,A_{2N+1}$ share the entangled graph state as resource. They cooperate to  realize the remote operations  $U=\otimes_{j=1}^N\exp{[i\a_j\s_{n_{O_j}}]}$ on $N$ unknown states for remote systems $O_j$ by applying LOCC. Here $\a$ is the angle of rotation only known by participants $A_2,A_3,...,A_{N+1}$ and unknown to others including the controller. Besides, the controller takes the responsibility to decide whether or not and when the implementation of remote operations should be done on each system. Thus our protocol shows advantage in terms of strong security and extensive applications.
In Sec. \ref{sec:tripartite}, we introduce the implementation of operation $U_C=\exp{[i\a\s_{n_C}]}$  on one remote system $C$ via a tripartite graph state. In Sec. \ref{sec:fivepartite}, we show the protocol via a five-partite graph state, where two operations are implemented on two remote systems respectively.  In this protocol, appropriate local Hadamard operations are employed. So it is different from the protocol via tripartite state. Then we generalize this protocol into the one via a $(2N+1)$-partite graph state ($N\geq2$). It is presented in Sec. \ref{sec:2n+1partite}.  

\subsection{The protocol via a tripartite graph state}
\label{sec:tripartite}
In this protocol, three participants Alice, Bob and Charlie locate in distributed places. Alice is the controller who decides whether and when the remote implementation of operation  $U_C=\exp[i\a\s_{n_C}]$ on system $C$ can be realized by Bob and Charlie. The diagram of this process is shown in FIG. \ref{figure:tripartite}. The three participants share the tripartite graph state
\begin{eqnarray}
\label{def:stateh3}
\ket{h_3}=&&CZ_{(a,b)}CZ_{(a,c)}\ket{+}^{\otimes 3}\nonumber\\
=&&\frac{1}{2\sqrt{2}}(\ket{000}+\ket{001}+\ket{010}+\ket{011}+\ket{100}-\ket{101}-\ket{110}+\ket{111})_{a,b,c},
\end{eqnarray}
where qubits $a,b,c$ belong to Alice, Bob and Charlie, respectively. The entangled state $\ket{h_3}$ is prepared by the circuit shown in FIG.  \ref{fig:entstate31}.

	\begin{figure}[h!] 
	\centerline{
		\Qcircuit @C=0.8em @R=0.75em {
			\lstick{\ket{0}_a} & \gate{H} &\ctrl{1}&\ctrl{2}&\qw \\
			\lstick{\ket{0}_b} & \gate{H} &\gate{Z}&\qw&\qw \\
			\lstick{\ket{0}_c} & \gate{H} &\qw     &\gate{Z}&\qw \\
		}
	}
	\caption{The preparation of  graph state $\ket{h_3}$. The expressions of Hadamard and controlled-Z gates used in this circuit is given in (\ref{def:Hgate}) and (\ref{def:CZgate}), respectively.}
	\label{fig:entstate31}
\end{figure}
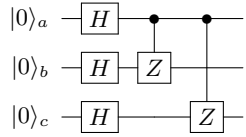

Given an unknown state $\ket{\Psi_{C}}$ for system $C$, Bob and Charlie implement the remote unitary operation $U_C=\exp[i\a\s_{n_C}]$ on system $C$, only with the permission of the controller Alice. This process is completed by applying LOCC via the shared entangled state $\ket{h_3}$.  Here $\a$ is any real number determined by Bob, and system $C$ belongs to Charlie. 

First, they construct the stator $S_3$, this process is performed on the state $\ket{\Psi_{C}}$. We do not show  $\ket{\Psi_{C}}$ in each step for simplicity. Charlie performs the local operation on his qubits $c$ and $C$, where
\begin{eqnarray}
U_{c,C}=\ket{0}_c\bra{0}\otimes I_C+\ket{1}_c\bra{1}\otimes \s_{n_C}.
\end{eqnarray} 
Here the operator $\s_{n_C}=\overline{n_C}\cdot\overline{\s}$  acting on the system $C$ satisfies $\s_{n_C}^2=I$. So the stator is 
\begin{eqnarray}
		\label{def:statorS3'}
S_3'=&&\frac{1}{2\sqrt{2}}(\ket{000}_{a,b,c}\otimes I_C+\ket{001}_{a,b,c}\otimes \s_{n_C}+\ket{010}_{a,b,c}\otimes I_C+\ket{011}_{a,b,c}\otimes \s_{n_C}\nonumber\\
+&&\ket{100}_{a,b,c}\otimes I_C-\ket{101}_{a,b,c}\otimes \s_{n_C}-\ket{110}_{a,b,c}\otimes I_C+\ket{111}_{a,b,c}\otimes \s_{n_C})\\
=&&\frac{1}{2}\ket{+}_a(\ket{00}_{b,c}\otimes I_C+\ket{11}_{b,c}\otimes \s_{n_C})+\frac{1}{2}\ket{-}_a(\ket{10}_{b,c}\otimes I_C+\ket{01}_{b,c}\otimes \s_{n_C}).\nonumber
\end{eqnarray}

 Now Alice's qubit $a$ is correlated with qubits $b$ and $c$. If Alice does not wish to cooperate with Bob and Charlie, she does nothing or something unknown to them. Then the relation between qubits $b$ and $c$ is unknown to Bob and Charlie and hence they can hardly construct the stator. Otherwise, if Alice wants the operations to be implemented, she performs a measurement of $\s_x$ on qubit $a$, and informs Bob of the measurement result. If the result is $\ket{-}$, Bob performs the operation $\s_x$ on qubit $b$, otherwise Bob need not perform any operation.  They obtain the stator
\begin{eqnarray}
S_3''=\ket{00}_{b,c}\otimes I_C+\ket{11}_{b,c}\otimes \s_{n_C}.
\end{eqnarray}
Then Charlie measures qubit $c$ in the basis $\ket{+}_c$ and $\ket{-}_c$. If the result is $\ket{-}$, Bob implements the operation $\s_{z}$ on qubit $b$, otherwise Bob does nothing.
Hence they construct the following stator $S_3$ successfully, 
\begin{eqnarray}
S_3=\ket{0}_b\otimes I_C+\ket{1}_b\otimes \s_{n_C}.
\end{eqnarray}

The second step is to implement the operation $U_C=\exp[i\a\s_{n_C}]$ on system $C$, with the help of  eigenoperator equation
\begin{eqnarray}
	\label{eq:eigenoperator3}
	e^{i\a\s_{x_b}}S_3=e^{i\a\s_{n_C}}S_3.
\end{eqnarray}
Note that $\s_{n_C}^2=I$, so we have  $e^{i\a\s_{n_C}}=\cos\a I_C+i\sin\a\s_{n_C}$.
 
 Now Bob and Charlie implement the remote operation by stator $S_3$. Bob implements the operation $\exp[i\a\s_{x_b}]$ on qubit $b$. Based on Eq. (\ref{eq:eigenoperator3}), the resulting state is 
\begin{eqnarray}
\ket{s}=&&e^{i\a\s_{x_b}}S_3\ket{\Psi_{C}}\\
=&&(\ket{0}_b\otimes I_C+\ket{1}_b\otimes \s_{n_C})e^{i\a\s_{n_C}}\ket{\Psi_{C}}.\nonumber
\end{eqnarray}
Bob measures qubit $b$ in the Z-basis $\ket{0_b}$ and $\ket{1_b}$, and informs Charlie the result. If it is $\ket{1_a}$, Charlie performs the local rotation $\exp[i\pi\s_{n_C}/2]=i\s_{n_C}$ on system $C$. This completes our CRIO protocol via $\ket{h_3}$.

\subsection{The protocol via a five-partite graph state}
\label{sec:fivepartite}
In this protocol, we take the five-partite graph state as entanglement resource. The controller Alice in this protocol decides whether and when the work of the following two groups can be done. The first group consists of Bob and David who aim to realize the remote implementation of operations $U_D=\exp{[i\a\s_{n_D}]}$, and the second group include Charlie and Eve who aim to realize $U_E=\exp{[i\b\s_{n_E}]}$.  They share the following graph state $\ket{h_5}$. For convenience of the measurement on qubit $a$, we rearrange the state in the basis $\ket{+}_a$ and $\ket{-}_a$,
\begin{eqnarray}
	\label{def:stateh5}
	\ket{h_5}=&&CZ_{(c,e)}CZ_{(c,d)}CZ_{(b,c)}CZ_{(a,c)}CZ_{(a,b)}\ket{+}^{\otimes 5}\nonumber\\
=	&&\frac{1}{4\sqrt{2}}(\ket{00000}+\ket{00100}+\ket{10000}+\ket{10100}+\ket{00001}-\ket{00101}+\ket{10001}-\ket{10101}\nonumber\\
	+&&\ket{01010}+\ket{01110}+\ket{11010}+\ket{11110}+\ket{01011}-\ket{01111}+\ket{11011}-\ket{11111}\nonumber\\
	+&&\ket{01000}-\ket{01100}-\ket{11000}+\ket{11100}+\ket{01001}+\ket{01101}-\ket{11001}-\ket{11101}\\
	+&&\ket{00010}-\ket{00110}-\ket{10010}+\ket{10110}+\ket{00011}+\ket{00111}-\ket{10011}-\ket{10111})_{a,b,c,d,e}\nonumber\\
	=&&\frac{1}{4}\ket{+}_a(\ket{0000}+\ket{0100}+\ket{0001}-\ket{0101}+\ket{1010}+\ket{1110}+\ket{1011}-\ket{1111})_{b,c,d,e}\nonumber\\
	+&&\frac{1}{4}\ket{-}_a(\ket{1000}-\ket{1100}+\ket{1001}+\ket{1101}+\ket{0010}-\ket{0110}+\ket{0011}+\ket{0111})_{b,c,d,e}.\nonumber
\end{eqnarray}
where qubits $a,b,c,d,e$ belong to Alice, Bob, Charlie, David, and Eve respectively, and the preparation of this state is shown in FIG. \ref{fig:entstate51}. For the operations  $U_D=\exp{[i\a\s_{n_D}]}$ and $U_E=\exp{[i\b\s_{n_E}]}$, $\a$ and $\b$ are real numbers determined by Bob and Charlie, respectively. System $D$ and $E$ belong to David and Eve respectively.

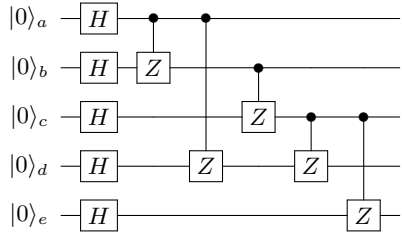
\begin{figure}[h!] 
	\centerline{
		\Qcircuit @C=0.8em @R=0.75em {
			\lstick{\ket{0}_a} & \gate{H}&\ctrl{1} &\ctrl{3}&\qw     &\qw     &\qw     &\qw\\
			\lstick{\ket{0}_b} & \gate{H}&\gate{Z} &\qw     &\ctrl{1}&\qw     &\qw     &\qw\\
			\lstick{\ket{0}_c} & \gate{H}&\qw      &\qw     &\gate{Z}&\ctrl{1}&\ctrl{2}&\qw\\
			\lstick{\ket{0}_d} & \gate{H}&\qw      &\gate{Z}&\qw     &\gate{Z}&\qw     &\qw\\	
			\lstick{\ket{0}_e} & \gate{H}&\qw      &\qw     &\qw     &\qw     &\gate{Z}&\qw\\
		}
	}
	\caption{The preparation of graph state $\ket{h_5}$. The corresponding graph for this state is shown in FIG. \ref{fig:graph}. }
	\label{fig:entstate51}
\end{figure}

 The first step is to construct the stator.  For convenience, we ignore the global factor in the following statement.  David and Eve perform the local operations $U_{d,D}$ and $U_{e,E}$ respectively, where
 \begin{eqnarray}
 U_{d,D}=&&\ket{0}_d\bra{0}\otimes I_D+\ket{1}_d\bra{1}\otimes \s_{n_D},\\
 U_{e,E}=&&\ket{0}_e\bra{0}\otimes I_E+\ket{1}_e\bra{1}\otimes \s_{n_E}.
 \end{eqnarray} 
This yields the stator 
 \begin{eqnarray}
 S'_5=&&\ket{+}_a(\ket{0000}_{b,c,d,e}\otimes I_DI_E+\ket{0100}_{b,c,d,e}\otimes I_DI_E+\ket{0001}_{b,c,d,e}\otimes I_D\s_{n_E}-\ket{0101}_{b,c,d,e}\otimes I_D\s_{n_E}\nonumber\\
 +&&\ket{1010}_{b,c,d,e}\otimes \s_{n_D}I_E+\ket{1110}_{b,c,d,e}\otimes \s_{n_D}I_E+\ket{1011}_{b,c,d,e}\otimes\s_{n_D}\s_{n_E}-\ket{1111}_{b,c,d,e}\otimes\s_{n_D}\s_{n_E})\nonumber\\
 +&&\ket{-}_a(\ket{1000}_{b,c,d,e}\otimes I_DI_E-\ket{1100}_{b,c,d,e}\otimes I_DI_E+\ket{1001}_{b,c,d,e}\otimes I_D\s_{n_E}+\ket{1101}_{b,c,d,e}\otimes I_D\s_{n_E}\nonumber\\
 +&&\ket{0010}_{b,c,d,e}\otimes \s_{n_D}I_E-\ket{0110}_{b,c,d,e}\otimes \s_{n_D}I_E+\ket{0011}_{b,c,d,e}\otimes \s_{n_D}\s_{n_E}+\ket{0111}_{b,c,d,e}\otimes\s_{n_D}\s_{n_E}).\nonumber\\
 \end{eqnarray}
 Next, Charlie performs the operation $H$ on qubit $c$. Note that $H\ket{+}=\ket{0}$ and $H\ket{-}=\ket{1}$. The stator becomes
  \begin{eqnarray}
  	\label{def:S''5}
 	S''_5=&&\ket{+}_a(\ket{0000}_{b,c,d,e}\otimes I_DI_E+\ket{0101}_{b,c,d,e}\otimes I_D\s_{n_E}
 	+\ket{1010}_{b,c,d,e}\otimes \s_{n_D}I_E+\ket{1111}_{b,c,d,e}\otimes\s_{n_D}\s_{n_E})\nonumber\\
 	+&&\ket{-}_a(\ket{1100}_{b,c,d,e}\otimes I_DI_E+\ket{1001}_{b,c,d,e}\otimes I_D\s_{n_E}+\ket{0110}_{b,c,d,e}\otimes \s_{n_D}I_E+\ket{0011}_{b,c,d,e}\otimes \s_{n_D}\s_{n_E}).\nonumber\\
 \end{eqnarray}
From (\ref{def:S''5}), one can find that Alice's qubit $a$ is correlated with qubits $c$ and $e$. If Alice does not wish to cooperate with the two groups, she does nothing or something unknown to them. Then the relation between qubits $b$ and $d$ is unknown to Bob and David and hence they can hardly construct the stator. So it does with Charlie and Eve. Otherwise, Alice wants the operations to be implemented then she implements the measurement of $\s_{x}$ on qubit $a$ and informs Bob and Charlie of the result by classical communication. If the result is $\ket{-}_a$, Bob and Charlie perform the operation $\s_{x}$ on qubit $b$, otherwise they do nothing. Thus they obtain the stator
\begin{eqnarray}
S'''_5=\ket{0000}_{b,c,d,e}\otimes I_DI_E+\ket{0101}_{b,c,d,e}\otimes I_D\s_{n_E}+\ket{1010}_{b,c,d,e}\otimes \s_{n_D}I_E+\ket{1111}_{b,c,d,e}\otimes \s_{n_D}\s_{n_E}.\nonumber\\
\end{eqnarray} 
David and Eve perform the measurement of $\s_{x}$ on qubit $d$ and $e$, and send their measurement results to Bob and Charlie, respectively. The diagram of this process is shown in FIG. \ref{fig:measurementde}.

\begin{figure}[h]
	\center{\includegraphics[width=8cm]  {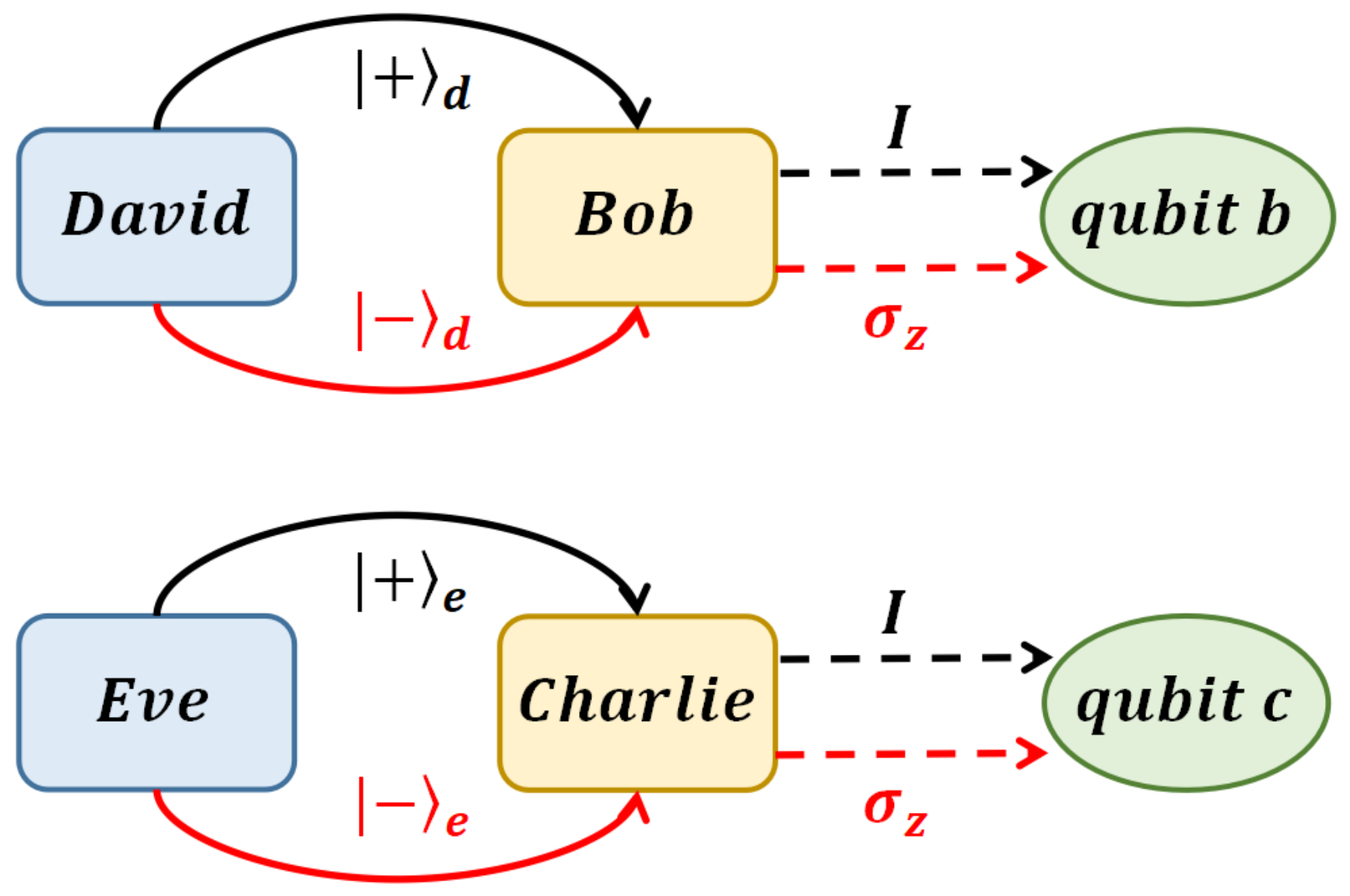}}
	\caption{Diagram showing the process for the measurement of qubits $d$ and $e$. David and Eve perform the measurement of $\s_{x}$ on qubit $d$ and $e$, respectively. The solid lines represent the measurement result, and the dashed lines represent the corresponding operations that performed on specified qubits.   When receiving the result $\ket{-}$, Bob and Charlie perform the operation $\s_{z}$ on their qubits, otherwise they do nothing($I$). }
	\label{fig:measurementde}
\end{figure}

 So they prepare the stator $S_5$ successfully, where
 \begin{eqnarray}
 S_5=\ket{00}_{b,c}\otimes I_DI_E+\ket{01}_{b,c}\otimes I_D\s_{n_E}+\ket{10}_{b,c}\otimes \s_{n_D}I_E+\ket{11}_{b,c}\otimes \s_{n_D}\s_{n_E}.
 \end{eqnarray}

Next they implement the operations  $\exp{[i\a\s_{n_D}]}$ and $\exp{[i\b\s_{n_E}]}$ on two unknown states $\ket{\Phi_D}$ and $\ket{\Phi_E}$  for system $D$ and $E$, respectively.
The relation $\s_{x_b}\s_{x_c}S_5=\s_{n_D}\s_{n_E}S_5$ can be established. Hence the stator $S_5$ also satisfies the following eigenoperator equation, which is similar to Eq. (\ref{eq:eigenoperator3}),  
 \begin{eqnarray}
 	\label{eq:eigenoperator5}
 e^{i\a\s_{x_b}}e^{i\b\s_{x_c}}S_5=e^{i\a\s_{n_D}}e^{i\b\s_{n_E}}S_5.
 \end{eqnarray}
Bob and Charlie implement the operations $\exp{[i\a\s_{x_b}]}$ and $\exp{[i\b\s_{x_c}]}$ on their qubits $b$ and $c$ respectively. Based on Eq. (\ref{eq:eigenoperator5}), we obtain that the state is now
\begin{eqnarray}
 &&e^{i\a\s_{x_b}}e^{i\b\s_{x_c}}S_5\ket{\Phi_D}\ket{\Phi_E}\nonumber\\
 =&&(\ket{00}_{b,c}\otimes I_DI_E+\ket{01}_{b,c}\otimes I_D\s_{n_E}+\ket{10}_{b,c}\otimes \s_{n_D}I_E\\
 +&&\ket{11}_{b,c}\otimes \s_{n_D}\s_{n_E})e^{i\a\s_{n_D}}\ket{\Phi_D}e^{i\b\s_{n_E}}\ket{\Phi_E}.\nonumber
\end{eqnarray}
Finally Bob and Charlie perform the measurement of $\s_{z}$ on qubits $b$ and $c$. Bob transmits the measurement result to David, and Charlie's result sends to Eve. If the sender's measurement result is $\ket{1}_b(\ket{1}_c)$, the receiver performs the operation $\exp{[i\pi\s_{n_D}/2]}=i\s_{n_D}$ ($\exp{[i\pi\s_{n_E}/2]}=i\s_{n_E}$), otherwise the receiver need not perform any operation.

\subsection{The protocol via a $(2N+1)$-partite graph state}
\label{sec:2n+1partite}
We show the protocol via a $(2N+1)$-partite graph state, which is the generalization of the protocol in Sec. \ref{sec:fivepartite}. In this protocol, $2N+1$ participants $A_1, A_2,..., A_{2N+1}$ are involved. Participant $A_1$ is the controller who supervises  $N$ distributed groups. Participants $A_{j-N}$ and $A_{j}$ work as a group to  implement the operation $\exp{[i\b_j\s_{n_{O_j}}]}$ on the unknown states $\ket{\Psi_j}$ for system $O_j$, for $j=N+2,N+3,...,2N+1$. For each group, the real numbers $\b_j$ and $\s_{n_{O_{j}}}$ is only available to  participants $A_{j-N}$ and $A_j$, respectively. During this process, the controller $A_1$  determines whether and when these operations should be implemented, but knows nothing about the parameters of the operations. That is to say, none of the participants is able to obtain the complete information of the target operations.  It guarantees the security of our protocol.  The $2N+1$ participants share the following $(2N+1)$-qubit graph state,
\begin{eqnarray}
	\label{def:stateh2n+1}
\ket{h_{2N+1}}=&&CZ_{(1,2)} CZ_{(1,N+2)}\prod_{k=3}^{N+1} \left[CZ_{(2,k)}CZ_{(k,N+2)} CZ_{(k,k+N)}\right]\ket{+}^{\otimes (2N+1)}\\
=&&\frac{1}{2^N\sqrt{2}}\sum_{q_1,q_2,...,q_{2N+1}=0}^1(-1)^{f(x)}\ket{q_1,q_2,...,q_{2N+1}}_{a_1,a_2,...,a_{2N+1}},\nonumber
\end{eqnarray}
where qubit $a_k$ belongs to participant  $A_k$ respectively, for $k=1,2,...,2N+1$. The boolean function $f(x):\{0,1\}^{2N+1}\rightarrow\{0,1\}$, for $x\equiv q_1q_2...q_{2N+1}$ and $q_j\in\{0,1\},j=1,2,...,2N+1$,
\begin{eqnarray}
f(x)=q_1q_2\oplus q_1q_{N+2}\oplus_{k=3}^{N+1} (q_2q_k\oplus q_kq_{N+2}\oplus q_{k}q_{k+N}).
\end{eqnarray}
Here $"\oplus"$ denotes plus modulo two, and 
\begin{eqnarray}
	q_sq_t=
	\begin{cases}
		1&\mbox{for $q_s=q_t=1$,}\\
		0&\mbox{otherwise.}
	\end{cases}
\end{eqnarray}
The state $\ket{h_{2N+1}}$ is prepared by the quantum circuit shown in FIG. \ref{fig:entstate2N+1}.

\begin{figure}[htp] 
	\centerline{
		\Qcircuit @C=0.8em @R=0.75em {
\lstick{\ket{0}_{a_1}}    & \gate{H}&\ctrl{1} &\ctrl{6}&\qw     &\qw     &\qw &\qw     &\qw     &\qw     &\qw     &\qw     &\qw&\qw  &\qw&\qw     &\qw     &\qw&\qw&\qw\\
\lstick{\ket{0}_{a_2}}    & \gate{H}&\gate{Z} &\qw     &\ctrl{1}&\ctrl{2}&\qw &\ctrl{4}&\qw     &\qw     &\qw     &\qw     &\qw&\qw  &\qw&\qw     &\qw     &\qw&\qw&\qw\\
\lstick{\ket{0}_{a_3}}    & \gate{H}&\qw      &\qw     &\gate{Z}&\qw     &\qw &\qw     &\ctrl{4}&\ctrl{5}&\qw     &\qw     &\qw&\qw  &\qw&\qw     &\qw     &\qw&\qw&\qw\\
\lstick{\ket{0}_{a_4}}    & \gate{H}&\qw      &\qw     &\qw     &\gate{Z}&\qw &\qw     &\qw     &\qw     &\ctrl{3}&\ctrl{5}&\qw&\qw  &\qw&\qw     &\qw     &\qw&\qw&\qw\\	
\lstick{\vdots}           & \vdots  &         &        &        &        &\cdots&      &        &        &        &        &   &\cdots&  &        &        &\vdots&\\       
\lstick{\ket{0}_{a_{N+1}}}& \gate{H}&\qw      &\qw     &\qw     &\qw     &\qw &\gate{Z}&\qw     &\qw     &\qw     &\qw     &\qw&\qw  &\qw&\ctrl{1}&\ctrl{5}&\qw&\qw&\qw\\
\lstick{\ket{0}_{a_{N+2}}}& \gate{H}&\qw      &\gate{Z}&\qw     &\qw     &\qw &\qw     &\gate{Z}&\qw     &\gate{Z}&\qw     &\qw&\qw  &\qw&\gate{Z}&\qw     &\qw&\qw&\qw\\
\lstick{\ket{0}_{a_{N+3}}}& \gate{H}&\qw      &\qw     &\qw     &\qw     &\qw &\qw     &\qw     &\gate{Z}&\qw     &\qw     &\qw&\qw  &\qw&\qw     &\qw     &\qw&\qw&\qw\\
\lstick{\ket{0}_{a_{N+4}}}& \gate{H}&\qw      &\qw     &\qw     &\qw     &\qw &\qw     &\qw     &\qw     &\qw     &\gate{Z}&\qw&\qw  &\qw&\qw     &\qw     &\qw&\qw&\qw\\
\lstick{\vdots}           &\vdots   &         &        &        &        &    &        &        &        &        &        &   &\cdots&  &        &        &\vdots&   \\ 
\lstick{\ket{0}_{a_{2N+1}}}&\gate{H}&\qw      &\qw     &\qw     &\qw     &\qw &\qw     &\qw     &\qw     &\qw     &\qw     &\qw&\qw  &\qw&\qw     &\gate{Z}&\qw&\qw&\qw\\
		}
	}
	\caption{The preparation of graph state $\ket{h_{2N+1}}$ shown in (\ref{def:stateh2n+1}). The controlled-Z gates $CZ(1,2)$, $CZ(1,N+2)$, $CZ(2,j)$, $CZ(j,N+2)$, $CZ(j,j+N)$ are employed in this circuit, for $j=3,4,...,N+1$. They correspond to edges in the second graph shown in FIG. \ref{fig:graph}. }
	\label{fig:entstate2N+1}
\end{figure}
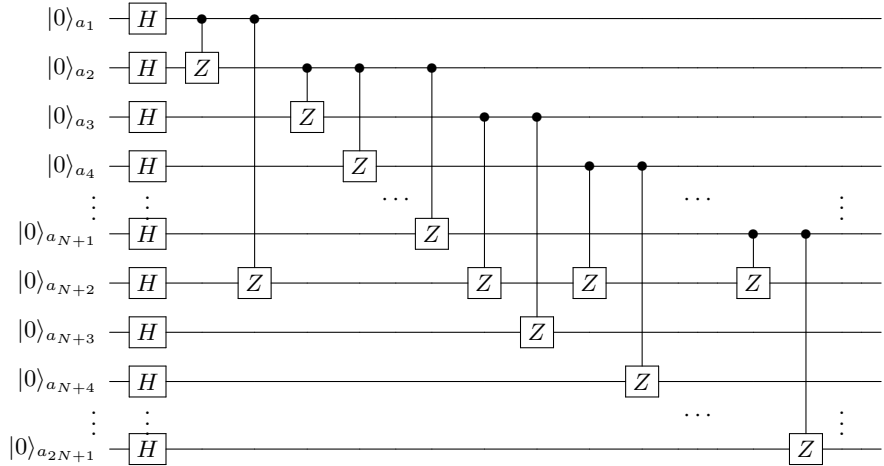

\begin{figure}[htbp]
	\centering
	\subfigure{
		\includegraphics[width=5.5cm]{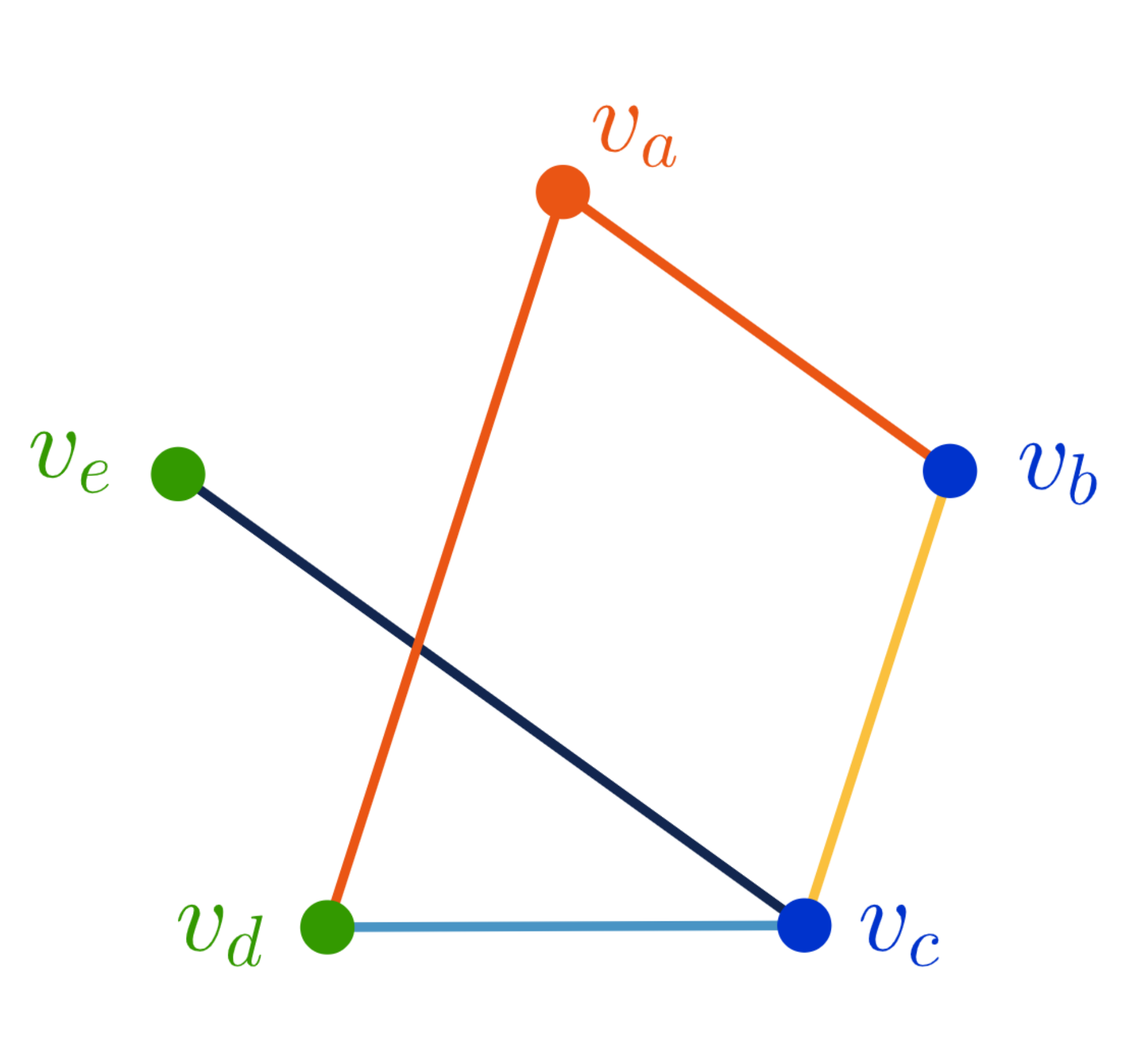}
	}
\quad
	\subfigure{
		\includegraphics[width=6.6cm]{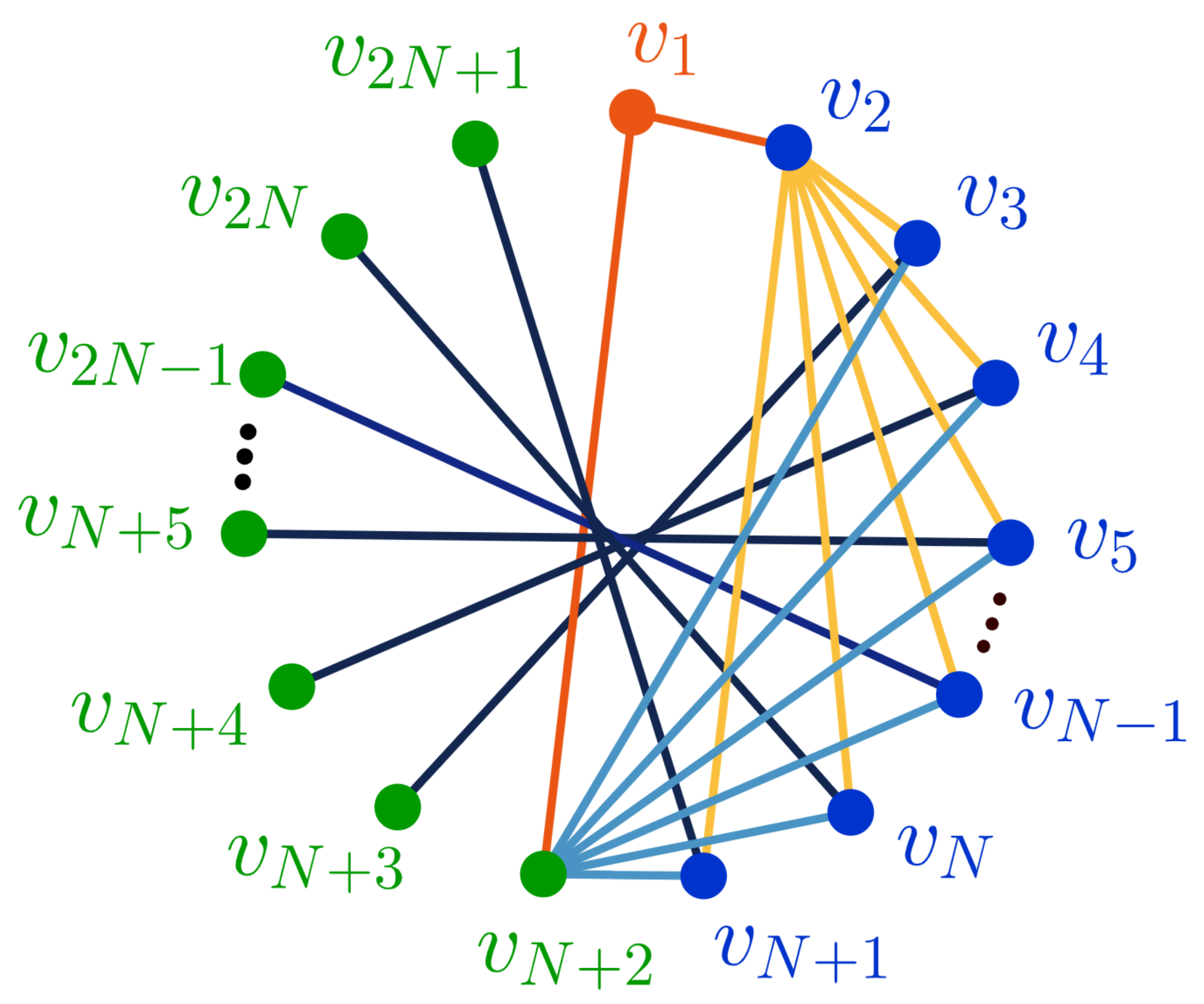}
	}
	\caption{The graphs $G_5=(V_5,E_5)$ and $G_{2N+1}=(V_{2N+1},E_{2N+1})$ correspond to graph states $\ket{h_5}$ in (\ref{def:stateh5}) and $\ket{h_{2N+1}}$ in (\ref{def:stateh2n+1}), respectively. Here $V_5=\{v_a,v_b,v_c,v_d,v_e\}$, $E_5=\{\{v_a,v_b\}, \{v_a,v_d\}, \{v_b,v_c\}, \{\{v_c,v_d\}, \{\{v_c,v_e\}\}$ and $V_{2N+1}=\{v_1,v_2,...,v_{2N+1}\}$, $E_{2N+1}=\{\{v_1,v_2\}, \{v_1,v_{N+2}\}, \{v_2,v_{j}\}, \{v_{j}, v_{N+2}\}, \{v_{j}, v_{N+j}\}\}$ with $j=3,4,...,N+1$. Each vertex is associated to a participant in the protocol.  Green vertices $v_{N+2},v_{N+3},...,v_{2N+1}$ correspond to participants owning $N$ remote systems, and the blue ones $v_2,v_3,...,v_{N+1}$ correspond to the participants aiming to implement remote operations on these systems.}
	\label{fig:graph}
\end{figure}

Now we show the realization of the remote operations by the cooperation of $2N+1$ participants. First, they prepare the stator $S_{2N+1}$ by LOCC, only with the permission of controller $A_1$. The stators are constructed on the unknown states $\ket{\Psi_j}$ for remote systems $O_j$. Second, the operations are implemented on these systems with the help of the stator. The construction of stator consists of four steps, i.e. STEP 1-4 in the following statement. In each step, the participants perform a kind of local operations or measurements. For convenience, we ignore the global factor in the following statement.

STEP 1: The participants $A_j$ perform the local operations 
\begin{eqnarray}
 U_{a_j,O_j}=\ket{0}_{a_j}\bra{0}\otimes I_{O_j}+\ket{1}_{a_j}\bra{1}\otimes \s_{n_{O_j}}
\end{eqnarray}
 on the entangled state $\ket{h_{2N+1}}$ in (\ref{def:stateh2n+1}) for qubit $a_j$ and system $O_j$ respectively, where $j=N+2,N+3,...,2N+1$. The operation $\s_{n_{O_j}}$ satisfies that $\s^0_{n_{O_j}}=\s^2_{n_{O_j}}=I_{O_j}$. The resulting stator is 
\begin{eqnarray}
	\label{def:statorn1}
S'_{2N+1}=\sum_{q_1,q_2,...,q_{2N+1}=0}^1&&(-1)^{f(x)}\ket{q_1,q_2,...,q_{2N+1}}_{a_1,a_2,...,a_{2N+1}}\\
&&\otimes\s^{q_{N+2}}_{n_{O_{N+2}}}\otimes  \s^{q_{N+3}}_{n_{O_{N+3}}}\otimes...\otimes\s^{q_{2N+1}}_{n_{O_{2N+1}}}.\nonumber
\end{eqnarray}

For convenience of the statement, we rearrange the stator $S'_{2N+1}$ according to qubit $a_1$ in the basis $\ket{+}_{a_1}$ and $\ket{-}_{a_1}$ as follows, 
\begin{eqnarray}
	\label{def:statorn11}
S'_{2N+1}=&&\ket{+}_{a_1}\sum_{q_2,...,q_{2N+1}}[(-1)^{g_1(x')}+(-1)^{g_2(x')}]\ket{q_2,...,q_{2N+1}}\otimes\s^{q_{N+2}}_{n_{O_{N+2}}}\otimes  ...\otimes\s^{q_{2N+1}}_{n_{O_{2N+1}}}\nonumber\\
+&&\ket{-}_{a_1}\sum_{q_2,...,q_{2N+1}}[(-1)^{g_1(x')}-(-1)^{g_2(x')}]\ket{q_2,...,q_{2N+1}}\otimes \s^{q_{N+2}}_{n_{O_{N+2}}}\otimes...\otimes\s^{q_{2N+1}}_{n_{O_{2N+1}}},
\end{eqnarray}
where $g_1(x')$ and $g_2(x')$ are boolean functions of $x'=q_2q_3...q_{2N+1}$:
\begin{eqnarray}
	\label{def:functiong_1}
g_1(x')=&&\oplus_{k=3}^{N+1}(q_2q_{k}\oplus q_kq_{N+2}\oplus q_{k}q_{k+N}),\\
\label{def:functiong_2}
g_2(x')=&&q_2\oplus q_{N+2}\oplus_{k=3}^{N+1}(q_2q_{k}\oplus q_kq_{N+2}\oplus q_{k}q_{k+N}).
\end{eqnarray} 
From Eqs. (\ref{def:functiong_1}) and (\ref{def:functiong_2}), one can obtain that $g_1(x')=g_2(x')$ when $q_2=q_{N+2}$, and $g_1(x')=g_2(x')\oplus 1$ when $q_2=q_{N+2}\oplus 1$. Hence, ignoring the global factor, the stator in (\ref{def:statorn11}) is equal to
\begin{eqnarray}
S'_{2N+1}=\ket{+}_{a_1}\otimes M+\ket{-}_{a_1}\otimes N,
\end{eqnarray}
where
\begin{eqnarray}
	\label{def:statorM}
M=\sum_{\substack{q_2,q_3,...,q_{N+1}\\q_{N+3},...,q_{2N+1}}}\!\!\!\!&&(-1)^{h_1(x'')}\ket{q_2,q_3,...,q_{N+1},q_{2},q_{N+3},...,q_{2N+1}}_{a_2,a_3,...,a_{2N+1}}\\&&\otimes\s^{q_{2}}_{n_{O_{N+2}}}\otimes \s^{q_{N+3}}_{n_{O_{N+3}}}\otimes ...\otimes\s^{q_{2N+1}}_{n_{O_{2N+1}}},\nonumber\\
\label{def:statorN}
N=\sum_{\substack{q_2,q_3,...,q_{N+1}\\q_{N+3},...,q_{2N+1}}}\!\!\!\!&&(-1)^{h_2(x'')}\ket{q_2,q_3,...,q_{N+1},q_{2}\oplus1,q_{N+3},...,q_{2N+1}}_{a_2,a_3,...,a_{2N+1}}\\&&\otimes\s^{q_{2}\oplus 1}_{n_{O_{N+2}}}\otimes \s^{q_{N+3}}_{n_{O_{N+3}}}\otimes ...\otimes\s^{q_{2N+1}}_{n_{O_{2N+1}}}.\nonumber
\end{eqnarray}
Here $h_1(x'')$ and $h_2(x'')$ are boolean functions of $x''=q_3q_4...q_{2N+1}$:
\begin{eqnarray}
h_1(x'')=&&\oplus_{k=3}^{N+1}q_kq_{k+N},\\
h_2(x'')=&&\oplus_{k=3}^{N+1}(q_k\oplus q_kq_{k+N}).
\end{eqnarray}

STEP 2: Participants $A_3, A_4,...,A_{N+1}$ perform the operation $H$ on their qubits $a_3,a_4,...,a_{N+1}$, respectively.

To show the effect of the operation $\otimes_{k=3}^{N+1}H_{a_k}$  on the stator $S'_{2N+1}$ in (\ref	{def:statorn11}) clearly, we rearrange $M$  in (\ref{def:statorM}) and $N$ in (\ref{def:statorN}) as follows,
\begin{eqnarray}
	\label{def:statorM2}
M=\sum_{q_2,q_{N+3},...,q_{2N+1}}\!\!\!\!&&\ket{q_2}_{a_2}\otimes_{k=3}^{N+1}\left(\sum_{q_k}(-1)^{q_kq_{k+N}}\ket{q_k}_{a_k}\right)\otimes\ket{q_2}_{a_{N+2}}\otimes_{m=3}^{N+1}\ket{q_{m+N}}_{a_{m+N}}\\&&\otimes\s^{q_{2}}_{n_{O_{N+2}}}\otimes \s^{q_{N+3}}_{n_{O_{N+3}}}\otimes ...\otimes\s^{q_{2N+1}}_{n_{O_{2N+1}}},\nonumber\\
\label{def:statorN2}
N=\sum_{q_2,q_{N+3},...,q_{2N+1}}\!\!\!\!&&\ket{q_2}_{a_2}\otimes_{k=3}^{N+1}\left(\sum_{q_k}(-1)^{q_k\oplus q_kq_{k+N}}\ket{q_k}_{a_k}\right)\otimes\ket{q_2\oplus 1}_{a_{N+2}}\otimes_{m=3}^{N+1}\ket{q_{m+N}}_{a_{m+N}}\\&&\otimes\s^{q_{2}\oplus1}_{n_{O_{N+2}}}\otimes \s^{q_{N+3}}_{n_{O_{N+3}}}\otimes ...\otimes\s^{q_{2N+1}}_{n_{O_{2N+1}}}.\nonumber
\end{eqnarray}

 For $k=3,4,...,N+1$, it holds that
\begin{eqnarray}
	\label{def:Hqn+k}
&&H_{a_k}(\sum_{q_k=0}^{1}(-1)^{q_kq_{k+N}}\ket{q_k}_{a_k})=\sqrt{2}\ket{q_{k+N}}_{a_k},\\
\label{def:Hqn+k+1}
&&H_{a_k}(\sum_{q_k=0}^{1}(-1)^{q_k\oplus q_kq_{k+N}}\ket{q_k}_{a_k})=\sqrt{2}\ket{q_{k+N}\oplus1}_{a_k}.
\end{eqnarray}
Applying (\ref{def:Hqn+k}), (\ref{def:Hqn+k+1}) to (\ref{def:statorM2}), (\ref{def:statorN2}) respectively, it yields the  following stator $S''_{2N+1}$.  The global factor $\sqrt{2}$ is ignored here, 
\begin{eqnarray}
	\label{def:S''2N+1}
	S''_{2N+1}=&&\ket{+}_{a_1}\sum_{q_2,q_3,...,q_{N+1}}\ket{q_2,...,q_{N+1},q_{2},...,q_{N+1}}\otimes\s^{q_{2}}_{n_{O_{N+2}}}\otimes\s^{q_{3}}_{n_{O_{N+3}}}\otimes  ...\otimes\s^{q_{N+1}}_{n_{O_{2N+1}}}\\
	+&&\ket{-}_{a_1}\sum_{q_2,q_3,...,q_{N+1}}\ket{q_2,...,q_{N+1},q_{2}\oplus1,...,q_{N+1}\oplus 1}\otimes\s^{q_{2}\oplus1}_{n_{O_{N+2}}}\otimes\s^{q_{3}\oplus1}_{n_{O_{N+3}}}\otimes  ...\otimes\s^{q_{N+1}\oplus1}_{n_{O_{2N+1}}}.\nonumber
\end{eqnarray}

STEP 3: From (\ref{def:S''2N+1}), one can find that the controller's qubit $a_1$ is correlated with qubits $a_k$ and $a_{k+N}$, for $k=2,3,...,N+1$. If the controller $A_1$ does not wish to cooperate with other groups including $A_k$ and $A_{k+N}$, he does nothing or something unknown to other groups. Then the relation between qubits $a_k$ and $a_{k+N}$ is unknown to participants $A_k$ and $A_{k+N}$ and hence they cannot realize the operations at the beginning of this subsection. Otherwise, if the controller $A_1$ permits the realization of these operations, he implements the measurement of $\s_{x}$ on qubit $a_1$, and informs $A_2,A_3,...,A_{N+1}$ of the result. If the result is $\ket{-}$, the receivers $A_2,A_3,...,A_{N+1}$ perform $\s_{x}$ on their qubits, otherwise they do nothing.  This yields the stator 
\begin{eqnarray}
S'''_{2N+1}=\sum_{q_2,q_3...,q_{N+1}}\!\!\!&&\ket{q_2,q_3...,q_{N+1},q_{2},q_3,...,q_{N+1}}_{a_2,a_3,...,a_{2N+1}}\\
&&\otimes\s^{q_{2}}_{n_{O_{N+2}}}\otimes\s^{q_3}_{n_{O_{N+3}}}\otimes  ...\otimes\s^{q_{N+1}}_{n_{O_{2N+1}}}\nonumber.
\end{eqnarray}

STEP 4: Participants $A_{N+2}$, $A_{N+3}$,..., $A_{2N+1}$ measure their qubits in the X-basis, and inform $A_2, A_3,..., A_{N+1}$ of the results by classical communication, respectively. If the sender's result is $\ket{-}$, the receiver performs the operation $\s_{z}$ on his qubit, otherwise the receiver does nothing. So they construct the stator
\begin{eqnarray}
S_{2N+1}=\sum_{q_2,q_3,...,q_{N+1}}\!\!\! &&\ket{q_2,q_3,...,q_{N+1}}_{a_2,a_3,...,a_{N+1}}\\
&&\otimes\s^{q_2}_{n_{O_{N+2}}}\otimes\s^{q_3}_{n_{O_{N+3}}}\otimes...\otimes\s^{q_{N+1}}_{n_{O_{2N+1}}},\nonumber
\end{eqnarray}
where $\s^{0}_{n_{O_{j}}}=I_{O_{j}}, j=N+2,...,2N+1$.

Then the stator $S_{2N+1}$ is employed to implement the remote operations on the unknown state $\ket{\Psi_j}$ for remote systems $O_j$.  Similar to the eigenoperator equation shown in (\ref{eq:firststator}), the equation for stator $S_{2N+1}$ still holds,
 \begin{eqnarray}
	\label{eq:eigenoperator2n+1}
	\left(\bigotimes_{k=2}^{N+1}\exp{[i\b_{k+N}\s_{x_{a_{k}}}]}\right) S_{2N+1}
	=\left(\bigotimes_{j=N+2}^{2N+1}\exp{[i\b_{j}\s_{n_{O_{j}}}]}\right)S_{2N+1},
\end{eqnarray}
where $\s^2_{n_{O_{j}}}=I_{O_j}$ and $\b_{j}$ is any real number determined by the participant $A_{j-N}$. By performing the local operation $	\exp{[i\b_{j}\s_{x_{a_{j-N}}}]}$ on qubit $a_{j-N}$, the participant $A_{j-N}$ realizes the  corresponding operation  $\exp{[i\b_{j}\s_{n_{O_{j}}}]}$ on remote system $O_j$.
The implementation consists of two steps.

STEP 5: The participant $A_{k}$ performs the local operation $	\exp{[i\b_{k+N}\s_{x_{a_k}}]}$ on qubit $a_{k}$,  for $k=2,3,...,N+1$. Using Eq. (\ref{eq:eigenoperator2n+1}), the state is now
\begin{eqnarray}
S_{2N+1}\left(\bigotimes_{j=N+2}^{2N+1}\exp{[i\b_{j}\s_{n_{O_{j}}}]}\ket{\Psi_{j}}\right).
\end{eqnarray}

STEP 6: To eliminate the remaining stator $S_{2N+1}$ on the state $\ket{\Psi_{j}}$ and only keep the target operation, the participant $A_{k}$ implements the measurement of  $\s_{z}$ on qubit $a_k$, and informs $A_{k+N}$ of the measurement result, where $k=2,3,...,N+1$. If the sender $A_k$'s result is $\ket{1}$, the receiver $A_{k+N}$ performs the operation $\exp{[i\pi\s_{n_{O_{k+N}}}/2]}=i\s_{n_{O_{k+N}}}$, otherwise the receiver need not perform any operation. So they eliminate the stator $S_{2N+1}$. The resulting state is 
\begin{eqnarray}
\bigotimes_{j=N+2}^{2N+1}\exp{[i\b_{j}\s_{n_{O_{j}}}]}\ket{\Psi_{j}}.
\end{eqnarray}
It indicates that they have implemented the remote operation successfully. 
\begin{figure}[h]
	\center{\includegraphics[width=11cm]  {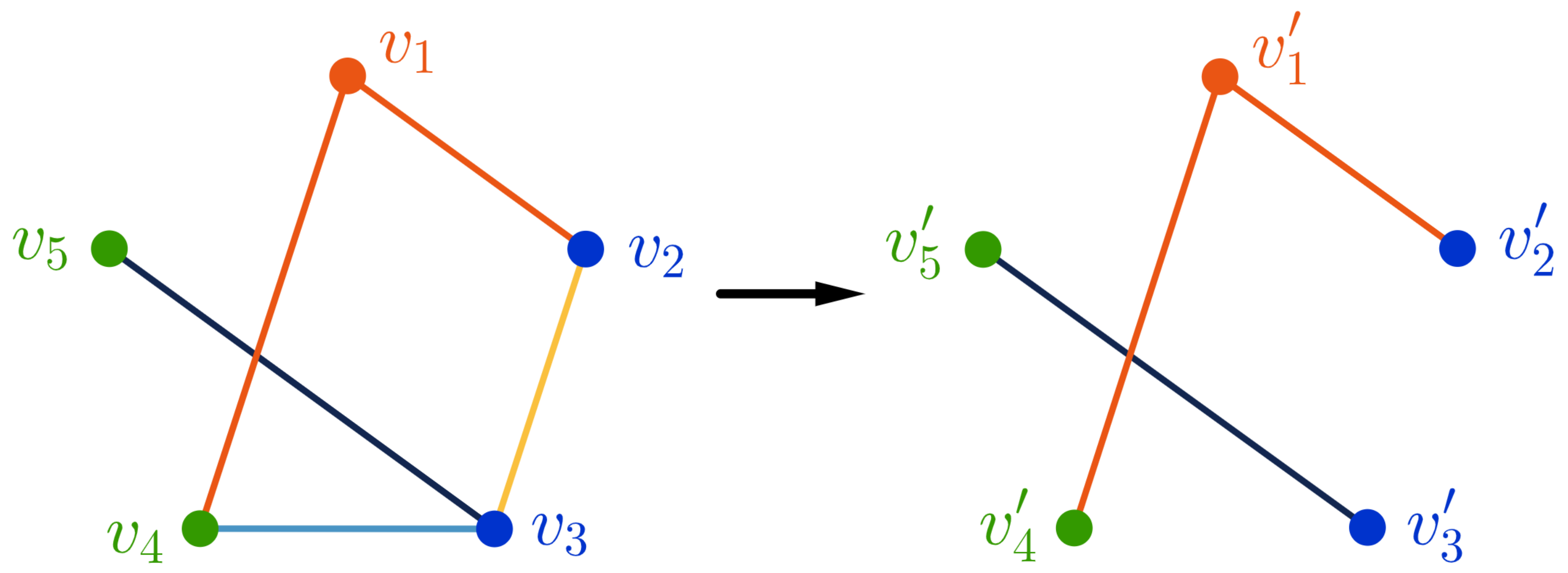}}
	\caption{The graph $G_5=(V_5,E_5)$ on the left hand side   corresponds to graph state $\ket{h_5}$, where $V_5=\{v_1,v_2,v_3,v_4,v_5\}$, $E_5=\{\{v_1,v_2\}, \{v_1,v_4\}, \{v_2,v_3\}, \{v_3,v_4\}, \{v_3,v_5\}\}$.  The other graph  $G'_5=(V'_{5},E'_{5})$ corresponds to the state $\ket{h'_5}$, where $V'_5=\{v'_1,v'_2,v'_3,v'_4,v'_5\}$,  $E'_5=\{\{v'_1,v'_2\}, \{v'_1,v'_4\}, \{v'_3,v'_5\}\}$. 
		Each edge of the graph is associated to a controlled-Z gate. Removing the gates $CZ(2,3)$ and $CZ(2,4)$  from the construction of state $\ket{h_5}$, the state $\ket{h'_5}$ can be obtained.}
	\label{fig:partialcontrol}
\end{figure}

In our controlled protocol via $\ket{h_{2N+1}}$, the controller is able to control  each of the $N$ groups to implement operations $\exp[i\b_{j}\s_{n_{O_{j}}}]$ on remote system $O_j$, for $j=N+2,N+3,...,2N+1$. In fact, according to the demand of realistic background,  the controller can control any $k$ of these $N$ groups, and discard the rest of $N-k$ groups with $k\geq1$. It can be realized by employing appropriate graph states. Compared with the construction of $\ket{h_{2N+1}}$ in FIG. \ref{fig:entstate2N+1} and \ref{fig:graph}, such states can be constructed by removing corresponding gates $CZ(2,k)$ and $CZ(k,N+2)$ in pairs, for $k=3,4,...,N+1$.
 We take the protocol via $\ket{h_5}$ as an example. In the former protocol presented in Sec. \ref{sec:fivepartite}, the controller can control  both groups. Now we assume the controller $A_1$ only wants to control  the group consisting of participants $A_2$ and $A_4$. This task can be realized by employing another graph state $\ket{h'_5}$. The construction of $\ket{h'_5}$ can be done with the help of the graph on the right hand side in FIG. \ref{fig:partialcontrol}. That is to say, $\ket{h_5}$ can be constructed by removing the gates $CZ(2,3)$ and $CZ(2,4)$ compared with the construction of state $\ket{h_5}$. From this point of view, our protocol can flexibly follow the actual demands and thus it shows advantage in extensive applications.

\section{geometric measure of entanglement for graph states}
\label{sec:calculateGM}
In this section, we investigate the GM of the graph states used in this paper. Then we compare the entanglement requirement between the protocol in Ref. \cite{reznik2002remote} and our protocol. We find that the two protocols require the same entanglement resource. Hence it is economic to realize the control function from the perspective of entanglement cost.

First we show an example by considering the tripartite graph state $\ket{h_3}$ shown in (\ref{def:stateh3}). Since the local unitary operations do not change the entanglement of the states, we perform the local Hadamard gate $H$ on qubit $a$ of $\ket{h_3}$, and obtain the state  
\begin{eqnarray}
\ket{g_3}=\frac{1}{2}(\ket{000}+\ket{011}+\ket{110}+\ket{101}).
\end{eqnarray} 

Now we consider the GM of $\r_3=\ket{g_3}\bra{g_3}$. According to Lemma \ref{lemma}, its closest product state can be chosen to be non-negative. Let $\ket{\varphi_3}=\otimes_{j=1}^3(\cos\t_j\ket{0}+\sin\t_j\ket{1})$ be a closest product state, with $0\leq \t_j\leq \frac{\pi}{2}$. After some calculations, we obtain that
\begin{eqnarray}
\L^2(\r_3)=&&\max_{\ket{\varphi_3}}\bra{\varphi_3}\rho_3\ket{\varphi_3}=\max_{\ket{\varphi_3}}|\braket{\varphi_3}{g_3}|^2\\
=&&\frac{1}{4}\max_{\t_1,\t_2,\t_3}[\cos\t_1\cos(\t_2-\t_3)+\sin\t_1\sin(\t_2+\t_3)]^2=\frac{1}{2},\nonumber\\
G(\r_3)=&&-2\log\L(\r_3)=1.
\end{eqnarray}
The maximum in the above equation is obtained at $\t_1=\t_2=\t_3=\pi/4$, i.e. $\ket{\varphi_3}=\frac{\sqrt{2}}{4}(\ket{0}+\ket{1})^{\otimes3}$.

Next we consider the GM of ($2N+1$)-partite graph state $\ket{h_{2N+1}}$ in (\ref{def:stateh2n+1}). We show the following fact:
\begin{proposition}
	\label{pro:GM}
	The geometric measure of entanglement for the state $\ket{h_{2N+1}}$ is equal to $N$, for $N\geq2$.
\end{proposition}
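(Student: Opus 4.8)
The plan is to compute $G(\ket{h_{2N+1}})$ by first moving $\ket{h_{2N+1}}$ to a non-negative state with local unitaries and then maximizing the overlap over non-negative product states, exactly as in the tripartite warm-up. The starting observation is that in the quadratic phase $f$ the controller bit $q_1$ and each middle bit $q_k$ ($k=3,\dots,N+1$) appear only through the products $q_1(q_2\oplus q_{N+2})$ and $q_k(q_2\oplus q_{N+2}\oplus q_{k+N})$. Applying the Hadamard gate $H$ to qubit $a_1$ and to every middle qubit $a_3,\dots,a_{N+1}$ is a local unitary, hence leaves GM invariant; summing over these $N$ bits collapses them and enforces the linear constraints $q_1=q_2\oplus q_{N+2}$ and $q_k=q_2\oplus q_{N+2}\oplus q_{k+N}$. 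What remains is a uniform non-negative superposition $\ket{g_{2N+1}}$ of $2^{N+1}$ computational basis vectors, each with amplitude $2^{-(N+1)/2}$, indexed by the free bits $q_2,q_{N+2}$ and the leaf bits $q_{N+3},\dots,q_{2N+1}$. By Lemma \ref{lemma} I may then restrict the maximization in (\ref{def:GM}) to non-negative product states $\ket{\varphi}=\bigotimes_k(\cos\theta_k\ket{0}+\sin\theta_k\ket{1})$ with $\theta_k\in[0,\pi/2]$.

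Next I would compute $\braket{\varphi}{g_{2N+1}}$. Since each leaf bit $q_{k+N}$ enters only the amplitudes on the pendant pair $(a_k,a_{k+N})$, the sum over the leaf bits factorizes across the $N-1$ pairs. Writing $s=q_2\oplus q_{N+2}$, each pair contributes $\cos(\theta_k-\theta_{k+N})$ in the sector $s=0$ and $\sin(\theta_k+\theta_{k+N})$ in the sector $s=1$, while the two hubs together with the constrained controller bit contribute $\cos\theta_1\cos(\theta_2-\theta_{N+2})$ to the $s=0$ sector and $\sin\theta_1\sin(\theta_2+\theta_{N+2})$ to the $s=1$ sector. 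Collecting terms yields
\begin{eqnarray}
\braket{\varphi}{g_{2N+1}}=\frac{1}{2^{(N+1)/2}}\left[\cos\theta_1\,u+\sin\theta_1\,v\right],
\end{eqnarray}
with $u=\cos(\theta_2-\theta_{N+2})\prod_{k=3}^{N+1}\cos(\theta_k-\theta_{k+N})$ and $v=\sin(\theta_2+\theta_{N+2})\prod_{k=3}^{N+1}\sin(\theta_k+\theta_{k+N})$. This is precisely the $N$-variable generalization of the tripartite overlap $\cos\theta_1\cos(\theta_2-\theta_3)+\sin\theta_1\sin(\theta_2+\theta_3)$ computed above.

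The maximization then decouples. Optimizing over $\theta_1$ first gives $\max_{\theta_1}(\cos\theta_1\,u+\sin\theta_1\,v)=\sqrt{u^2+v^2}$, and since every factor of $u$ and $v$ is a cosine or a sine one has $u\le1$ and $v\le1$, so $\sqrt{u^2+v^2}\le\sqrt2$. Equality is attained by setting all of $\theta_2,\theta_{N+2},\theta_3,\dots,\theta_{2N+1}$ to $\pi/4$, which forces $u=v=1$, and then $\theta_1=\pi/4$, just as in the tripartite case. Therefore $\L(\ket{h_{2N+1}})=\sqrt2\,/\,2^{(N+1)/2}=2^{-N/2}$, whence $\L^2=2^{-N}$ and $G=-2\log\L=N$.

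I expect the only genuinely delicate step to be the Hadamard reduction and the accompanying bookkeeping in the general $(2N+1)$-qubit indexing: one must check that applying $H$ to exactly the controller and middle qubits cancels all residual signs and produces a truly uniform non-negative superposition, and that the leaf sum factorizes cleanly over the pendant pairs. Once the factorized two-term overlap is in hand, the bound $u^2+v^2\le2$ and its tightness at the symmetric point $\theta_k=\pi/4$ are immediate, so the maximization presents no real obstacle.
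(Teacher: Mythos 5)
Your proof is correct and follows essentially the same route as the paper's: a local Hadamard reduction of $\ket{h_{2N+1}}$ to a uniform non-negative superposition, Lemma \ref{lemma} to restrict to non-negative product states, the factorized overlap $2^{-(N+1)/2}\bigl(\cos\t_1\prod_{t=2}^{N+1}\cos(\t_t-\t_{t+N})+\sin\t_1\prod_{t=2}^{N+1}\sin(\t_t+\t_{t+N})\bigr)$, and maximization at $\t_j=\pi/4$. The only (minor) differences are that you explicitly include the Hadamard on qubit $a_1$, which is indeed needed to render the state non-negative in the computational basis and is left implicit in the paper, and that you justify the maximum via $\sqrt{u^2+v^2}\le\sqrt{2}$ rather than asserting it directly.
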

\begin{proof}
We reduce the state $\ket{h_{2N+1}}$ by applying the local Hadamard gates  $H$ on qubits $a_3, a_4,...,a_{N+1}$. Then it can be transformed to the following non-negative state,
\begin{eqnarray}
	\ket{g_{2N+1}}=\frac{1}{\sqrt{2^{N+1}}}\sum_{q_2,...,q_{N+1}=0}^1(\!\!\!\!&&\ket{0,q_2,q_3,...,q_{N+1}, q_2,q_3,...,q_{N+1}}\\
	+\!\!\!\!&&\ket{1,q_2\oplus1,q_3\oplus1,...,q_{N+1}\oplus1, q_2,q_3,...,q_{N+1}})_{a_1,a_2,...,a_{2N+1}}.\nonumber
\end{eqnarray} 
So the GM of $\ket{h_{2N+1}}$  is equal to that of $\ket{g_{2N+1}}$. Let $\r_{2N+1}=\ket{g_{2N+1}}\bra{g_{2N+1}}$ and $\ket{\varphi_{2N+1}}=\otimes_{j=1}^{2N+1}(\cos\t_j\ket{0}+\sin\t_j\ket{1})$. We denote $q_k\oplus 1$ as $\overline{q_k}$. Obviously, we have
\begin{eqnarray}
	\label{eq:braketqk}
(\cos\t_k\bra{0}+\sin\t_k\bra{1})\ket{q_k}=\cos^{\overline{q_k}}\t_k\sin^{q_k}\t_k,
\end{eqnarray}
where $q_k=0,1$ and $\cos^{0}\t_k=\sin^{0}\t_k=1$.
Using (\ref{eq:braketqk}), we obtain that
\begin{eqnarray}
\braket{\varphi_{2N+1}}{g_{2N+1}}
=\frac{1}{\sqrt{2^{N+1}}}\sum_{q_2,q_3,...,q_{N+1}}[\!\!\!\!\!&&\cos\t_1(\prod_{k=2}^{N+1}\cos^{\overline{q_k}}\t_k\sin^{q_k}\t_k)(\prod_{l=2}^{N+1}\cos^{\overline{q_{l}}}\t_{l+N}\sin^{q_l}\t_{l+N})\\
&&+\sin\t_1(\prod_{m=2}^{N+1}\cos^{q_m}\t_m\sin^{\overline{q_m}}\t_m)(\prod_{n=2}^{N+1}\cos^{\overline{q_n}}\t_{n+N}\sin^{q_n}\t_{n+N})].\nonumber
\end{eqnarray}
By some calculations, we have
\begin{eqnarray}
	\label{eq:innerproduct}
\braket{\varphi_{2N+1}}{g_{2N+1}}
=\frac{1}{\sqrt{2^{N+1}}}\cos\t_1\prod_{t=2}^{N+1}\cos(\t_t-\t_{t+N})+\sin\t_1\prod_{s=2}^{N+1}\sin(\t_s+\t_{s+N}).
\end{eqnarray}
Using Lemma \ref{lemma}, we have $0\leq\t_j\leq\pi/2$  with $j=1,2,...,2N+1$. From (\ref{eq:innerproduct}), the value of $\braket{\varphi_{2N+1}}{g_{2N+1}}$ reaches the maximum $1/\sqrt{2^N}$ when $\t_j=\pi/4$. Hence we have
\begin{eqnarray}
	\L^2(\r_{2N+1})=\max_{\t_j}|\braket{\varphi_{2N+1}}{g_{2N+1}}|^2=\frac{1}{2^N}.\nonumber
\end{eqnarray}
It holds that
\begin{eqnarray}
G(\r_{2N+1})=-2\log\L(\r_{2N+1})=N.
\end{eqnarray}
\end{proof}

Now we compare the entanglement requirement between the protocol in Ref. \cite{reznik2002remote} and our controlled protocol in terms of GM. The protocol for remote operations by stator is firstly proposed in Ref. \cite{reznik2002remote}. In this protocol,  $2N$ participants share the   $2N$-partite entangled state $\ket{\phi_{2N}}$ to realize the  remote operations on $N$ systems without the controller. The entangled states used in that protocol are given as follows,
\begin{eqnarray}
	\label{def:statephi2}
	\ket{\phi_{2}}=&&\frac{1}{\sqrt{2}}(\ket{00}+\ket{11}),\\
		\label{def:statephi4}
	\ket{\phi_{4}}=&&\frac{1}{2}(\ket{0000}+\ket{0101}+\ket{1010}+\ket{1111}),\\
	\vdots\nonumber\\
		\label{def:statephi2n+1}
\ket{\phi_{2N}}=&&\frac{1}{\sqrt{2^N}}\sum_{q_1,q_2,...,q_{N}=0}^1\ket{q_1,q_2,...,q_N,q_1,q_2,...,q_N}.
\end{eqnarray}

 We investigate GM of the states in (\ref{def:statephi2})-(\ref{def:statephi2n+1}) and list the entangled states required in the two protocols for a given number of systems.  The results are presented in TABLE. \ref{table:GM}.

\begin{table}[h]
	\caption{The comparison of entanglement resource (entangled states and its GM) between our controlled implementation protocol  and the protocol in Ref. \cite{reznik2002remote} for a given number of remote system $N$. }
	\label{table:GM}
	\begin{tabular}{|c|c|c|c|c|}	
		\hline
	&   \multicolumn{2}{c|}{\makecell[c]{Controlled remote implementation\\ protocol in this paper}}   &\multicolumn{2}{c|}{\makecell[c]{Remote implementation protocol\\ in Ref. \cite{reznik2002remote}}}   \\
		\hline
Number of systems&Entangled state&GM&Entangled state&GM\\
\hline
1 & $\ket{h_3}$ in (\ref{def:stateh3})  &1  &$\ket{\phi_{2}}$ in (\ref{def:statephi2}) &1 \\
\hline
2 & $\ket{h_5}$ in (\ref{def:stateh5})  &2  &$\ket{\phi_{4}}$ in (\ref{def:statephi4}) &2 \\
\hline
\vdots & \vdots  &\vdots  &\vdots &\vdots \\
\hline
for any integer $N$ & $\ket{h_{2N+1}}$ in (\ref{def:stateh2n+1})  &$N$  &$\ket{\phi_{2N}}$ in (\ref{def:statephi2n+1}) &$N$ \\
\hline
	\end{tabular}
\end{table}
From TABLE. \ref{table:GM}, we see that the protocol we proposed in this paper requires the same entanglement resource as the former protocol in Ref. \cite{reznik2002remote}. That is to say, although one "controller" who enables control over other groups is added in our protocol, the entanglement cost of our protocol does not increase. Hence it is economic to realize the control function from the perspective of entanglement cost.

\section{control power analysis}
\label{sec:POVM}
In this section, we show the control power analysis of our controlled protocol. In analogy to the control power of quantum teleportation \cite{li2015analysis}, the controller's power here is determined by the situation that the remote operation be accomplished without the controller's help. If the controller does not wish the remote operation to be executed, he will not perform the measurement and other participants can hardly realize the target operations on remote systems.
We show that without the controller's permission, eight specified operations can be implemented with the success rate of 50\%, other operations can be realized with the success rate of 25\%. It means that the control power is reliable in our controlled remote implementation protocol.  The main result of this section is shown in Proposition \ref{pro:controlpower}.

We consider the protocol via $\ket{h_{3}}$ in Sec. \ref{sec:tripartite}. Other protocols can be analyzed in the same way, as the result we obtained in this section is valid for each group  in the protocol via $\ket{h_{2N+1}}$. To implement the remote operation $e^{i\a\s_{n_C}}$ without the controller Alice's help, Bob and Charlie collaborate to perform some POVM measurement on qubits $b$ and $c$ to eliminate the entanglement between qubit $a$ and $b,c$.  In particular, Bob carries out the POVM  $\{M_j=\ket{\b_j}\bra{\b_j}\}$ and Charlie carries out $\{N_k=\ket{\g_k}\bra{\g_k}\}$, for $j,k=1,2$. The operators satisfy  two basic restrictions,
\begin{eqnarray}
	\label{def:conditionbj}
\sum_{j=1}^2M_j^\dg M_j=\sum_{j=1}^2\ket{\b_{j}}\bra{\b_{j}}=I,\\
\label{def:conditiongk} 
\sum_{k=1}^2N_k^\dg N_k=\sum_{k=1}^2\ket{\g_{k}}\bra{\g_{k}}=I.
\end{eqnarray}

We investigate the probability by which POVM operators $M_j$ and $N_k$ occur in the measurement. It is denoted as $p(j,k)$. We normalize the stator  $S_3'$ in (\ref{def:statorS3'}) and define the normalized stator as $W_3$, where
\begin{eqnarray}
	W_3=\frac{S_3'}{\sqrt{\tr(S_3'^\dg S_3')}}=\frac{1}{\sqrt{2}}S_3'.
\end{eqnarray}
Note that $\tr_C(\s_{n_C})=0$. So we  have
\begin{eqnarray}
	\label{def:probability}
	p(j,k)=&&\tr_C\left[W_3^\dg(I_a\otimes M_{j_b}^\dg M_{j_b} \otimes N_{k_c}^\dg N_{k_c})W_3\right]\nonumber\\
	=&&\frac{1}{8}\tr_C\left[I_C+\sin(2\t_j)\sin(2\l_k)\cos\varphi_j\cos\og_k\s_{n_C}\right]\\
	=&&\frac{1}{8}\tr_C(I_C)=\frac{1}{4}.\nonumber
\end{eqnarray} 
That is to say, the probability of performing $ M_{j_b}^\dg M_{j_b} \otimes N_{k_c}^\dg N_{k_c}=M_{j_b}\otimes N_{k_c}$ on qubits $b$ and $c$ is equal to 25\% for any $(j,k)=(1,1),...(2,2)$. 

Suppose $\ket{\b_j}=\cos\t_j\ket{0}+e^{i\varphi_j}\sin\t_j\ket{1}$ and $\ket{\g_k}=\cos\l_k\ket{0}+e^{i\og_k}\sin\l_k\ket{1}$. Without loss of generality, we assume $\t_j, \l_k\in[0,\frac{\pi}{2}]$ and $\varphi_j,\og_k\in [0,2\pi)$. 
Let  $c_{s,t}=\braket{\b_j}{s}\braket{\g_k}{t}$ with $s,t=0,1$, i.e.
\begin{eqnarray}
	\label{def:c00}
	c_{0,0}=&&\cos\t_j\cos\l_k,\\
	c_{0,1}=&&e^{-i\og_k}\cos\t_j\sin\l_k,\\
	c_{1,0}=&&e^{-i\varphi_j}\sin\t_j\cos\l_k,\\
	\label{def:c11}
	c_{1,1}=&&e^{-i(\og_k+\varphi_j)}\sin\t_j\sin\l_k.
\end{eqnarray}

We consider the situation that POVM operators $M_j$ and $N_k$ occur in the measurement, for $j,k=1,2$. After the measurement, the stator $S_3'$ in (\ref{def:statorS3'}) becomes
\begin{eqnarray}
	\label{def:T31}
	T_3'=&&\frac{1}{2}\ket{+}_a(M_j\ket{0}_{b}N_k\ket{0}_c\otimes I_C+M_j\ket{1}_{b}N_k\ket{1}_c\otimes \s_{n_C})\nonumber\\
	+&&\frac{1}{2}\ket{-}_a(M_j\ket{1}_{b}N_k\ket{0}_c\otimes I_C+M_j\ket{0}_{b}N_k\ket{1}_c\otimes \s_{n_C})\\
	=&&\frac{1}{2}\ket{+}_a(c_{0,0}\ket{\b_{j},\g_k}_{b,c}\otimes I_C+c_{1,1}\ket{\b_{j},\g_k}_{b,c}\otimes \s_{n_C})\nonumber\\
	+&&\frac{1}{2}\ket{-}_a(c_{1,0}\ket{\b_{j},\g_k}_{b,c}\otimes I_C+c_{0,1}\ket{\b_{j},\g_k}_{b,c}\otimes \s_{n_C}).\nonumber
\end{eqnarray}

Bob and  Charlie aim to implement the operation 
\begin{eqnarray}
	\label{def:operation}
e^{i\a\s_{n_C}}=\cos\a I_C+i\sin\a\s_{n_C}
\end{eqnarray}
on system $C$ without Alice's help. Their goal is to make qubit $a$ separate from remaining qubits $b$ and $c$ by POVM measurement. 
We analyze the possible angle $\a$ in the operation $e^{i\a\s_{n_C}}$ that can be realized in this scheme, and obtain the following fact:
\begin{proposition}
	\label{pro:controlpower}
For the protocol via $\ket{h_{3}}$ in Sec. \ref{sec:tripartite}, the probabilistic implementation of remote operation $e^{i\a\s_{n_C}}$ can be realized by POVM measurement without controller's permission. The success rate to realize the operation $e^{i\a\s_{n_C}}$ is equal to 50\% for $\a\in\left\{0,\frac{\pi}{4},\frac{\pi}{2},\frac{3\pi}{4},\pi,\frac{5\pi}{4},\frac{3\pi}{2},\frac{7\pi}{4}\right\}$, and that is equal to 25\% for $\a\in\bigcup_{m=0}^7(\frac{m\pi}{4},\frac{(m+1)\pi}{4})$.
\end{proposition}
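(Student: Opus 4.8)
The plan is to read off, directly from the post-measurement stator $T_3'$ in (\ref{def:T31}), the exact conditions under which Bob and Charlie force a definite rotation onto system $C$, and then to count how many of the four equiprobable POVM outcomes can be made to yield $e^{i\a\s_{n_C}}$ at once. First I would pull the common ket $\ket{\b_j,\g_k}_{b,c}$ out of $T_3'$, leaving the residual object $\ket{+}_aO_+ + \ket{-}_aO_-$ with $O_+ = c_{0,0}I_C + c_{1,1}\s_{n_C}$ and $O_- = c_{1,0}I_C + c_{0,1}\s_{n_C}$. Since the controller's qubit $a$ is inaccessible, a well-defined operation is imprinted on $C$ only if $a$ decouples, i.e. $O_+$ and $O_-$ are proportional. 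As $I_C$ and $\s_{n_C}$ are linearly independent this is the single scalar condition $c_{0,0}c_{0,1}=c_{1,0}c_{1,1}$; inserting (\ref{def:c00})--(\ref{def:c11}) and cancelling $\cos\l_k\sin\l_k$ reduces it to $\cos^2\t_j = e^{-2i\varphi_j}\sin^2\t_j$, which forces $\t_j=\pi/4$ and $\varphi_j\in\{0,\pi\}$. Requiring this for \emph{both} outcomes, together with the completeness relation (\ref{def:conditionbj}), pins Bob's POVM to the $\s_x$ eigenbasis $\{\ket{+},\ket{-}\}$.

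With Bob measuring $\s_x$, the surviving operator on $C$ for outcome $(j,k)$ is
\begin{eqnarray}
O_C \propto \cos\l_k\,I_C + (-1)^{j-1}e^{-i\og_k}\sin\l_k\,\s_{n_C}.
\end{eqnarray}
Matching this against $e^{i\a\s_{n_C}}=\cos\a\,I_C+i\sin\a\,\s_{n_C}$ from (\ref{def:operation}) requires the ratio of the $\s_{n_C}$-coefficient to the $I_C$-coefficient to be the purely imaginary number $i\tan\a$. Hence $e^{-i\og_k}$ must be imaginary, forcing $\og_k\in\{\pi/2,3\pi/2\}$ (compatible with (\ref{def:conditiongk}), which makes $\og_2=\og_1+\pi$), and the realized angle is then governed by $\tan\a=\pm\tan\l_k$ or $\pm\cot\l_k$. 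Taking $\og_1=3\pi/2,\ \og_2=\pi/2$ and $\l_2=\pi/2-\l_1$, the four outcomes $(1,1),(2,1),(1,2),(2,2)$ are favorable precisely when $\tan\a$ equals $\tan\l_1,\ -\tan\l_1,\ -\cot\l_1,\ \cot\l_1$ respectively.

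By the probability computation (\ref{def:probability}) each of the four outcomes occurs with probability $\tfrac14$ irrespective of the POVM parameters, so the success rate for a given $\a$ is $\tfrac14$ times the number of favorable outcomes Charlie can realize with one choice of $\l_1$. I would then optimize this count over $\l_1$. The four target values $\{\tan\l_1,-\tan\l_1,-\cot\l_1,\cot\l_1\}$ collapse to a common value $T=\tan\a$ with multiplicity two exactly when $\tan\l_1=\cot\l_1$ (so $\l_1=\pi/4$, giving $T=\pm1$) or when $\tan\l_1=0$ (giving $T=0$ or $T=\infty$); one checks that three or four can never coincide. Thus two outcomes are simultaneously favorable iff $\tan\a\in\{0,\pm1,\infty\}$, i.e. $\a\in\{0,\tfrac{\pi}{4},\tfrac{\pi}{2},\tfrac{3\pi}{4},\pi,\tfrac{5\pi}{4},\tfrac{3\pi}{2},\tfrac{7\pi}{4}\}$, giving the rate $2\times\tfrac14=50\%$; for every other $\a$ a single outcome can always be arranged (e.g. choosing $\tan\l_1=\tan\a$), giving $\tfrac14=25\%$, which is the claim.

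The main obstacle is this closing optimization: one must show the counts are tight — that Charlie can never push three or four outcomes favorable, and can always secure at least one — while correctly threading the completeness constraints (\ref{def:conditionbj}) and (\ref{def:conditiongk}), which couple the two outcomes of each party through orthonormality of $\{\ket{\b_1},\ket{\b_2}\}$ and $\{\ket{\g_1},\ket{\g_2}\}$. Care is also needed with the degenerate measurements $\l_k\in\{0,\pi/2\}$, which correspond to the $\s_z$ basis and produce the endpoint operators $I_C$ and $\s_{n_C}$; these recover the special values $\a=0$ and $\a=\pi/2$ and must be verified to give $50\%$ rather than being spuriously excluded by the cancellation of $\cos\l_k\sin\l_k$. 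Everything else is the routine substitution of (\ref{def:c00})--(\ref{def:c11}) already set up before the statement.
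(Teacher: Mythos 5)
Your proposal is correct and follows essentially the same route as the paper's Appendix A: decoupling qubit $a$ via proportionality of the two operator components of $T_3'$ (the paper's condition $c_{0,0}/c_{1,0}=c_{1,1}/c_{0,1}$), which forces $\t_j=\pi/4$, $\varphi_j\in\{0,\pi\}$, $\og_k\in\{\pi/2,3\pi/2\}$, $\l_1+\l_2=\pi/2$, then using the uniform outcome probability $1/4$ and counting favorable outcomes, with the $\s_z$-basis degenerate case handled separately (the paper's Case I). Your closing step — optimizing the number of coincidences among $\{\pm\tan\l_1,\pm\cot\l_1\}$ — is just a more compact packaging of the paper's explicit tabulation, so nothing essential differs.
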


We prove  Proposition \ref{pro:controlpower} in Appendix \ref{appendix:controlpower}.  

Now we show some security analysis of the POVM measurement scheme. In our controlled protocol in Sec. \ref{sec:tripartite}, the deterministic implementation of $e^{i\a\s_{n_C}}$ can be realized with the permission of controller Alice. The rotation angle $\a$  can be chosen as any value and Charlie is unaware of any information of the operation that Bob wants to implement. However, in this POVM measurement scheme in Sec. \ref{sec:POVM}, we claim that Charlie can obtain the value of $\a$ with the probability of $25\%$ or 12.5\%. It means that the confidentiality of remote operation $e^{i\a\s_{n_C}}$ may be destroyed. In fact, the rotation angle $\a$ in $e^{i\a\s_{n_C}}$ can only be one of \{$\frac{n\pi}{2}$ with $n=0,1,2,3$\}  and \{$\frac{m\pi}{4}$ with $m=1,3,5,7$\} when Charlie prepares the POVM operators with $\l_k=0$ and $\l_k=\frac{\pi}{4}$, respectively. So there are four possible rotation angles $\a$ for a group of POVM operators with given parameters.  Charlie may guess what the rotation angle $\a$ is, and get to the correct answer with the probability of 25\%.  When Charlie prepares the operator with $\l_1\in(0,\frac{\pi}{4})\cup(\frac{\pi}{4},\frac{\pi}{2})$, the target rotation angle can only be one of $\{\pi-\l_1, 2\pi-\l_1,  \frac{3\pi}{2}-\l_1, \frac{\pi}{2}-\l_1, \l_1,\l_1+\pi, \l_1+\frac{3\pi}{2}, \l_1+\frac{\pi}{2}\}$, so Charlie may be aware of the value of $\a$ with the probability of 12.5\%. 
	
To sum up, Bob and Charlie can hardly realize the remote implementation of operation perfectly without the permission of controller Alice. The control power in our  controlled protocol is convincing.

\section{experimental feasibility}
\label{sec:experimental}
As the entanglement resource of our protocol, graph states are the most readily available multipartite resource in the laboratory, and they have already been built and used for information processing experimentally. In a recent work,  the deterministic protocol is implemented, by which the GHZ states of up to 14 photons and linear cluster states of up to 12 photons have been grown with a fidelity lower bounded by 76(6)\% and 56(4)\%, respectively \cite{thomas2022efficient}. A scheme to prepare an ultrahigh-fidelity four-photon linear cluster state has been proposed, and it has been experimentally demonstrated with the fidelity of 0.9517 $\pm$ 0.0027.  This scheme can be directly extended to more photons to generate an N-qubit linear cluster state \cite{zhang2016experimental}.  Recently, an efficient scheme is demonstrated to prepare graph states with only a polynomial overhead using long-lived atomic quantum memories \cite{zhang2022quantum}. Such technique can be used to prepare the graph states (\ref{def:stateh2n+1}) in our protocol, so the requirement of entanglement resource can be satisfied. 

Remote implementation has been demonstrated theoretically and realized experimentally,  such as  remote state preparation \cite{pogorzalek2019secure}, nonlocal CNOT operation implementation \cite{zhou2019hyper}  and remote generation of entanglement \cite{morin2014remote}. In our controlled remote implementation protocol, the participants prepare the stators and implement the remote operation by applying LOCC on shared entangled state. The local operations and measurements required in our protocol are $U_{a_j,O_j}, H, \s_{x}, \s_{z}$ operations and X-basis, Z- basis measurements, respectively. In the work presented in \cite{bhaskar2020experimental}, a kind of memory-enhanced quantum communication is demonstrated experimentally. In this work, some local operations and measurements in the X and Z bases are implemented with a time-delay interferometer (TDI). The device consists of a diamond nanophotonic resonator containing SiV quantum memory  with an integrated microwave stripline.  By reducing the possibility that an additional photon reaches the cavity, high spin-photon gate fidelities are enabled. Measurements are performed with high-fidelity by keeping track of the timing when the TDI piezo voltage reaches a limiting value, which guarantees that the SiV is always resonant with the photonic qubits.  The experimental technique in this work can be employed to realize our protocol. Hence our protocol is feasible according to the present experimental technologies.  Additionally, the memory-based communication nodes allow asynchronous Bell-state measurement and it may enhance the performance of our protocol experimentally.

\section{conclusion}
\label{sec:conclusion}
We have proposed the protocol for controlled remote implementation of operations in the form of $U=\exp[i\a\s_{n}]$ for each system. A family of graph states is constructed as the entanglement resource in our protocol. Sharing the $(2N+1)$-partite graph state as channel, $2N$ participants are able to realize the remote operations $U$ on $N$ unknown states for distant systems respectively, only with the permission of a controller. The implementation requirements of our protocol can be satisfied only by means of LOCC. Further we have characterized GM of the graph states in our protocol. Compared with the entanglement requirement of the protocol in \cite{reznik2002remote}, the control function of our protocol is realized economically. Based on the result of control power analysis, the control power is reliable  in our protocol, i.e. others can hardly realize the implementation without the permission of controller. Further we have exhibited the experimental feasibility of our protocol in terms of current techniques.

Many problems arising from this paper can be further explored. As we all know, multilevel systems as qudits feature more advantages than their binary counterpart.  Our protocol can be considered in high-dimensional Hilbert spaces, and may be used to implement some other operations. The protocol with more controllers for specific systems can be studied by applying appropriate graph states and local implementations. The interaction between quantum states and environment is unavoidable, leading to the loss of accuracy. Considering the situation that graph states in our protocol are affected by noise, the probabilistic implementation of that can be developed. 

\section*{ACKNOWLEDGMENTS}
We thank Li Yu,  Jun Li and Zhaohui Wei for careful reading of the whole paper. The authors were supported by the NNSF of China (Grant No. 11871089) and the Fundamental Research Funds for the Central Universities (Grant No. ZG216S2005).

\section*{AUTHOR CONTRIBUTIONS}
 All authors contributed to the discussion of results and writing of the manuscript.

\section*{COMPETING INTERESTS}
The authors declare no competing interests.

  \appendix
 \section{control power analysis in terms of POVM measurement}
 \label{appendix:controlpower}
 In this section, we show the proof of proposition \ref{pro:controlpower} in Sec. \ref{sec:POVM}.
 
\begin{proof}
	The proof includes two parts: first we consider case $\textrm{I}$ for $c_{0,0}=c_{1,0}=0$ or $c_{0,1}=c_{1,1}=0$ in TABLE \ref{table:coefficients} in Sec. \ref{sec:case1}, then we consider case $\textrm{II}$ for $c_{0,1}c_{1,0}\neq0$ in TABLE \ref{table:coefficients2} in Sec. \ref{sec:case2}, where $c_{s,t}$ with $s,t=1,2$ are defined in (\ref{def:c00})-(\ref{def:c11}).
	
	For other cases, the entanglement may not be eliminated or the same results are obtained as that in the former two cases. Case $\textrm{III}$ for $c_{0,0}c_{1,1}\neq0$ can be analyzed by the same way as case $\textrm{II}$ and the same results are obtained, as the two cases both require two sets of coefficients $\{c_{0,0},c_{1,1}\}$ and $\{c_{1,0},c_{0,1}\}$ to be proportional, i.e. $\frac{c_{0,0}}{c_{1,0}}=\frac{c_{1,1}}{c_{0,1}}$ in case $\textrm{II}$ and $\frac{c_{1,0}}{c_{0,0}}=\frac{c_{0,1}}{c_{1,1}}$ in case  $\textrm{III}$. In case $\textrm{IV}$ for $c_{0,0}c_{0,1}\neq0$ and case $\textrm{V}$ for $c_{1,0}c_{1,1}\neq0$, the entanglement can be eliminated only when $c_{s,t}\neq0$ with $s,t=1,2$, which is included in case $\textrm{II}$.   
	
	\subsection{ Case \textrm{I} : $c_{0,0}=c_{1,0}=0$ or $c_{0,1}=c_{1,1}=0$ }
	\label{sec:case1}
	The first case that $c_{0,0}=c_{1,0}=0$ or $c_{0,1}=c_{1,1}=0$  corresponds to $\l_k=\frac{\pi}{2}$ or $\l_k=0$ in (\ref{def:c00})-(\ref{def:c11}), respectively.  In this case, the operations $e^{i\a\s_{n_C}}$ with $\a=0,\frac{\pi}{2},\pi, \frac{3\pi}{2}$ can be realized by TABLE \ref{table:coefficients}, and the success rate of realization is equal to  50\%.
	
	As an example, the operations $e^{i\a\s_{n_C}}$ with $\a=0$ or $\pi$ and $\a=\frac{\pi}{2}$ or $\frac{3\pi}{2}$ can be realized by different combinations of POVM operators. Bob and Charlie prepare the POVM operators with the parameters $\t_1=0, \t_2=\frac{\pi}{2}, \varphi_1=\varphi_2=0$, and $\l_1=0,\l_2=\frac{\pi}{2}, \og_1=\og_2=\frac{\pi}{2}$. 
	Their POVM operators are
	\begin{eqnarray}
		M_1=N_1=\bma 1& 0\\ 0& 0\ema,\quad
		M_2=N_2=\bma 0& 0\\ 0&1\ema.
	\end{eqnarray}
	When the POVM operators $(M_j,N_k)$ occurs in the measurement, the stator $T_3'$ in (\ref{def:T31}) becomes $H_3^{(j,k)}$, where
	\begin{eqnarray}
		H_3^{(1,1)}=&&\frac{1}{2}\ket{+}_a\ket{\b_{1},\g_1}_{b,c}\otimes I_C,\\
		H_3^{(1,2)}=&&\frac{1}{2}\ket{-}_a\ket{\b_{1},\g_2}_{b,c}\otimes (-i\s_{n_C}),\\
		H_3^{(2,1)}=&&\frac{1}{2}\ket{-}_a\ket{\b_{2},\g_1}_{b,c}\otimes I_C,\\
		H_3^{(2,2)}=&&\frac{1}{2}\ket{+}_a\ket{\b_{2},\g_2}_{b,c}\otimes(-i\s_{n_C}).
	\end{eqnarray}
	Note that $I_C\propto e^{i0\s_{n_C}}\propto e^{i\pi\s_{n_C}}$,  and $-i\s_{n_C}\propto e^{i\frac{\pi}{2}\s_{n_C}}\propto e^{i\frac{3\pi}{2}\s_{n_C}}$. Hence, the operation $e^{i\a\s_{n_C}}$ with rotation angle $\a=0$ or $\pi$ can be realized when $(M_1,N_1)$ and  $(M_2,N_1)$ occur in the measurement;  the operation with rotation angle $\a=\frac{\pi}{2}$ or $\frac{3\pi}{2}$ can be realized when $(M_1,N_2)$ and  $(M_2,N_2)$ occur in the measurement. From (\ref{def:probability}), each of the four combinations of POVM operators  occurs with the probability of $25\%$.  So Bob can only implement his desired operation with the success rate of 50\%. For example, Bob wishes to implement the operation $e^{i\pi\s_{n_C}}$ on system $C$, he can only carry it off when the POVM operators $(M_1,N_1)$ and $(M_2,N_1)$ occur, and the probability of that is 50\%.  
	
	The operations $e^{i\a\s_{n_C}}$ with $\a=0$ or $\pi$ and $\a=\frac{\pi}{2}$ or $\frac{3\pi}{2}$ can also be realized by other POVM operators with different parameters. The process of implementation is the same as that of the former example. Combined with (\ref{def:conditionbj}) and (\ref{def:c00})-(\ref{def:c11}), we obtain the possible parameters in Bob and Charlie's POVM operators and its corresponding rotation angle $\alpha$ in $e^{i\a\s_{n_C}}$  that can be realized. They are shown in TABLE \ref{table:coefficients}. Each of the four combinations of POVM operators $(M_j,N_k)$,  occurs with the same probability of $25\%$, for $j,k=1,2$.  In this case, Bob can realize his desired operation with the probability of 50\%, and  his target operation can only be  $e^{i\a\s_{n_C}}$ with the rotation angle $\a=\frac{n\pi}{2}$, for $n=0,1,2,3$. 
	
	\begin{table}[h]
		\renewcommand\arraystretch{1.25}
		\caption{The possible parameters $\t_j, \varphi_j, \l_k, \og_k$ in Bob and Charlie's POVM operators and its corresponding rotation angle $\alpha$ that can be realized, for $j,k=1,2$.   Column 9 contains the choice of POVM operators $(M_j, N_k)$, which is associated to different rotation angle $\a$. Columns 10-13 contain the coefficients in (\ref{def:T31})  under specific POVM operators.}
		\label{table:coefficients}
		\begin{tabular}{|c|c|c|c|c|c|c|c|c|c|c|c|c|c|}
			\hline
			\multicolumn{4}{|c|}{\makecell[c]{parameters in Bob's\\POVM operator}}& \multicolumn{4}{c|}{\makecell[c]{parameters in Charlie's\\POVM operator}}&\makecell[c]{POVM\\ operators} &   \multicolumn{4}{c|}{coefficients in (\ref{def:T31}) }&\makecell[c]{rotation angle\\ in $e^{i\a\s_{n_C}}$}     \\
			\hline
			$\t_1$&$\t_2$&$\varphi_1$&$\varphi_2$&$\l_1$&$\l_2$&$\og_1$&$\og_2$&$M_j,N_k$ &$c_{0,0}$&$c_{0,1}$&$c_{1,0}$&$c_{1,1}$&$\a$  \\
			\hline
			\multirow{4}{*}{[0,$\frac{\pi}{2}$] }   & \multirow{4}{*}{$\frac{\pi}{2}-\t_1$}& \multirow{4}{*}{$[0,2\pi]$} &\multirow{4}{*}{ $\varphi_1\pm\pi$} &\multirow{4}{*}{0}   &\multirow{4}{*}{$\frac{\pi}{2}$} &\multirow{4}{*}{ $[0,2\pi]$}&  \multirow{4}{*}{ $[0,2\pi]$}&$M_1,N_1$&$\cos\t_1$&0&$e^{-i\varphi_1}\sin\t_1$&0&   0 or $\pi$    \\
			\cline{9-14}
			&   & & &  &  &  &  &$M_1,N_2$&0&$e^{-i\og_2}\cos\t_1$&0&$e^{-i(\og_2+\varphi_1)}\sin\t_1$&   $\frac{\pi}{2}$ or $\frac{3\pi}{2}$  \\
			\cline{9-14}
			&   & & &  &  &  &  &$M_2,N_1$&$\cos\t_2$&0&$e^{-i\varphi_2}\sin\t_2$&0& 0 or $\pi$  \\
			\cline{9-14}
			&   & & &  &  &  &  &$M_2,N_2$&0&$e^{-i\og_2}\cos\t_2$&0&$e^{-i(\og_2+\varphi_2)}\sin\t_2$&   $\frac{\pi}{2}$ or $\frac{3\pi}{2}$  \\
			\hline
			\multirow{4}{*}{[0,$\frac{\pi}{2}$] }   & \multirow{4}{*}{$\frac{\pi}{2}-\t_1$}&\multirow{4}{*}{$[0,2\pi]$} &\multirow{4}{*}{ $\varphi_1\pm\pi$} &\multirow{4}{*}{$\frac{\pi}{2}$}   &\multirow{4}{*}{0} &\multirow{4}{*}{ $[0,2\pi]$}&  \multirow{4}{*}{ $[0,2\pi]$}&$M_1,N_1$&0&$e^{-i\og_1}\cos\t_1$&0&$e^{-i(\og_1+\varphi_1)}\sin\t_1$&    $\frac{\pi}{2}$ or $\frac{3\pi}{2}$   \\
			\cline{9-14}
			&   & & &  &  &  &  &$M_1,N_2$&$\cos\t_1$&0&$e^{-i\varphi_1}\sin\t_1$&0& 0 or $\pi$  \\
			\cline{9-14}
			&   & & &  &  &  &  &$M_2,N_1$&0&$e^{-i\og_1}\cos\t_2$&0&$e^{-i(\og_1+\varphi_2)}\sin\t_2$& $\frac{\pi}{2}$ or $\frac{3\pi}{2}$ \\
			\cline{9-14}
			&   & & &  &  &  &  &$M_2,N_2$&$\cos\t_2$&0&$e^{-i\varphi_2}\sin\t_2$&0& 0 or $\pi$  \\
			\hline
		\end{tabular}
	\end{table}
	
	\subsection{Case \textrm{II}: $c_{0,1}c_{1,0}\neq0$}
	\label{sec:case2}
	Next we consider the second case that $c_{0,1}c_{1,0}\neq0$, i.e. $\t_j,\l_k\in(0, \frac{\pi}{2})$ in (\ref{def:c00})-(\ref{def:c11}). We set $\l_k\neq0,\frac{\pi}{2}$ and  $\a\neq\frac{n\pi}{2}$ with   $n=0,1,2,3$ here, as they have been discussed in the former case. To eliminate the entanglement between qubit $a$ and $b,c$, the coefficients $c_{s,t}$ in (\ref{def:T31}) should satisfy the following restriction,
	\begin{eqnarray}
		\label{eq:cK}
		\frac{c_{0,0}}{c_{1,0}}=\frac{c_{1,1}}{c_{0,1}}=K,
	\end{eqnarray}
	where $K\in\bbC$ is a constant. So the stator in (\ref{def:T31}) becomes
	\begin{eqnarray}
		\label{def:T32}
		T_3'=\frac{1}{2}(K\ket{+}_a+\ket{-}_a)\ket{\b_{j},\g_k}_{b,c}\otimes(c_{1,0}I_C+c_{0,1}\s_{n_C}).
	\end{eqnarray}
	The target operation that Bob and  Charlie try to implement on system $C$ is  $e^{i\a\s_{n_C}}=\cos\a I_C+i\sin\a\s_{n_C}$. So the coefficients in (\ref{def:T32}) should satisfy
	\begin{eqnarray}
		\label{eq:c10c01X}
		c_{1,0}=L\cos\a,\quad c_{0,1}=iL\sin\a,
	\end{eqnarray}
	where $L=x+iy\in\bbC$ is a constant with $x\neq0$ or $y\neq0$, $x,y\in\bbR$. 
	From (\ref{def:c00})-(\ref{def:c11}), (\ref{eq:cK}) and (\ref{eq:c10c01X}),  we obtain that
	\begin{eqnarray}
		\label{eq:c00}
		c_{0,0}=&&\cos\t_j\cos\l_k=KL\cos\a,\\
		\label{eq:c01}
		c_{0,1}=&&e^{-i\og_k}\cos\t_j\sin\l_k=iL\sin\a,\\
		\label{eq:c10}
		c_{1,0}=&&e^{-i\varphi_j}\sin\t_j\cos\l_k=L\cos\a,\\
		\label{eq:c11}
		c_{1,1}=&&e^{-i(\og_k+\varphi_j)}\sin\t_j\sin\l_k=iKL\sin\a.
	\end{eqnarray}
	
	Now we analyze the possible angle $\a$ that satisfies the conditions above. Note that  $\t_j,\l_k\in(0,\frac{\pi}{2})$. From (\ref{eq:c00})-(\ref{eq:c11}), we have 
	\begin{eqnarray}
		e^{2i\varphi_j}\cos^2\t_j=\sin^2\t_j.
	\end{eqnarray}
	So we have $e^{2i\varphi_j}\in \bbR$ and thus
	\begin{eqnarray}
		\label{def:varphij}
		&&\varphi_j=0\;\mbox{or} \; \pi,\\
		\label{def:tj}
		&&\t_1=\t_2=\frac{\pi}{4}.
	\end{eqnarray}

	Then from (\ref{eq:c10}) and (\ref{def:varphij}), we have
	\begin{eqnarray}
		\sin\t_j\cos\l_k=x\cos\a+iy\cos\a,
	\end{eqnarray}
	or
	\begin{eqnarray}
		-\sin\t_j\cos\l_k=x\cos\a+iy\cos\a.
	\end{eqnarray}
	Obviously, $x\cos\a+iy\cos\a\in\bbR$. Note that $\t_j,\l_k\in(0,\frac{\pi}{2}) $ and thus $\cos\a\neq 0$.
	Hence $y=0$, i.e. $L=x\in \bbR$. From (\ref{eq:c01}), we have
	\begin{eqnarray}
		\label{def:ogk}
		\og_k=\frac{\pi}{2}\;\mbox{or}\;\frac{3\pi}{2}.
	\end{eqnarray}
	Using (\ref{eq:c01}),  (\ref{eq:c10}) and (\ref{def:tj}), we have 
	\begin{eqnarray}
		\label{eq:X}
		\frac{1}{2}(\cos^2\l_k+\sin^2\l_k)=L^2(\cos^2\a+\sin^2\a)\Rightarrow L=\pm\frac{\sqrt{2}}{2}.
	\end{eqnarray}
	From (\ref{def:conditionbj}), (\ref{def:conditiongk}), (\ref{def:varphij}), (\ref{def:tj}) and (\ref{def:ogk}), we obtain that
	\begin{eqnarray}
		\label{eq:t1og1}
		&&\varphi_1\neq\varphi_2, \; \og_1\neq\og_2,\\
		\label{eq:l1l2}
		&&\l_1+\l_2=\frac{\pi}{2}.
	\end{eqnarray}
	Considering the restrictions (\ref{def:varphij}), (\ref{def:tj}), (\ref{def:ogk})-(\ref{eq:l1l2}) on Eqs. (\ref{eq:c01}) and (\ref{eq:c10}), we show that the remote operation $e^{i\a\s_{n_C}}$ with the rotation angle 
	\begin{eqnarray}
		\label{def:alpharange}
		\a\in\bigcup_{m=0}^7(\frac{m\pi}{4},\frac{(m+1)\pi}{4})
	\end{eqnarray} 
	can be realized with the success rate of 25\%, and the operations corresponding to
	\begin{eqnarray}
		\label{eq:alpharestriction}
		\a=\frac{n\pi}{4} \quad \mbox{with}\quad  n=1,3,5,7.
	\end{eqnarray}
	can be realized with the success rate of 50\%. The implementation of these operations $e^{i\a\s_{n_C}}$ with $\a\in \cup_{m=0}^3(\frac{m\pi}{2},\frac{(m+1)\pi}{2})$ can be realized by TABLE \ref{table:coefficients2}.

	As an example, Bob wants to realize the operation $e^{i\a\s_{n_C}}$ with $\a=\frac{3\pi}{4}$. For this purpose,  Bob and Charlie choose the POVM operators with the parameters $\t_1=\t_2=\frac{\pi}{4}$, $\varphi_1=0,\varphi_2=\pi$ and  $\l_1=\l_2=\frac{\pi}{4}$, $\og_1=\frac{\pi}{2},\og_2=\frac{3\pi}{2}$, respectively. Their POVM operators are
	\begin{eqnarray}
		M_1=\ket{\b_1}\bra{\b_1}=\bma \frac{1}{2}& \frac{1}{2}\\ \frac{1}{2}& \frac{1}{2}\ema,\quad
		M_2=\ket{\b_2}\bra{\b_2}=\bma \frac{1}{2}& -\frac{1}{2}\\ -\frac{1}{2}& \frac{1}{2}\ema,\\
		N_1=\ket{\g_1}\bra{\g_1}=\bma\frac{1}{2}& -\frac{1}{2}i\\ \frac{1}{2}i& \frac{1}{2}\ema,\quad
		N_2=\ket{\g_2}\bra{\g_2}=\bma \frac{1}{2}& \frac{1}{2}i\\ -\frac{1}{2}i& \frac{1}{2}\ema.
	\end{eqnarray}
	When the POVM operators $(M_j,N_k)$ occurs in the measurement, the stator $T_3'$ in (\ref{def:T32}) becomes $T_3^{(j,k)}$, where
	\begin{eqnarray}
		\label{def:T3(1,1)}
		T_3^{(1,1)}=&&\frac{1}{2}(\ket{+}_a+\ket{-}_a)\ket{\b_{1},\g_1}_{b,c}\otimes(\frac{1}{2}I_C-\frac{1}{2}i\s_{n_C}),\\
		\label{def:T3(1,2)}
		T_3^{(1,2)}=&&\frac{1}{2}(\ket{+}_a+\ket{-}_a)\ket{\b_{1},\g_2}_{b,c}\otimes(\frac{1}{2}I_C+\frac{1}{2}i\s_{n_C}),\\
		\label{def:T3(2,1)}
		T_3^{(2,1)}=&&\frac{1}{2}(-\ket{+}_a+\ket{-}_a)\ket{\b_{2},\g_1}_{b,c}\otimes(-\frac{1}{2}I_C-\frac{1}{2}i\s_{n_C}),\\
		\label{def:T3(2,2)}
		T_3^{(2,2)}=&&\frac{1}{2}(-\ket{+}_a+\ket{-}_a)\ket{\b_{2},\g_2}_{b,c}\otimes(-\frac{1}{2}I_C+\frac{1}{2}i\s_{n_C}).
	\end{eqnarray}
	Note that $(\frac{1}{2}I_C-\frac{1}{2}i\s_{n_C})\propto e^{i\frac{3\pi}{4}\s_{n_C}}\propto e^{i\frac{7\pi}{4}\s_{n_C}}$ and $(\frac{1}{2}I_C+\frac{1}{2}i\s_{n_C})\propto e^{i\frac{\pi}{4}\s_{n_C}}\propto e^{i\frac{5\pi}{4}\s_{n_C}}$.
	When the POVM operators $(M_1,N_1)$, $(M_1,N_2)$, $(M_2,N_1)$, $(M_2,N_2)$ occur, the operation $e^{i\a_t\s_{n_C}}$ with $t=1,2,3,4$ can be realized respectively, where $\a_1=\a_4=\frac{3\pi}{4}$ or $\frac{7\pi}{4}$, $\a_2=\a_3=\frac{\pi}{4}$ or $\frac{5\pi}{4}$.
	Bob's goal is to implement the operation $e^{i\a\s_{n_C}}$ with  $\a=\frac{3\pi}{4}$. It can only be realized when $(M_1,N_1)$ or $(M_2,N_2)$ occur. From (\ref{def:probability}), each of the four combinations of POVM operators  $(M_j,N_k)$ occurs with the same probability of $25\%$, for $j,k=1,2$. So the success rate of realizing Bob's target operation is 50\%.

	Other possible POVM operators with different parameters and its corresponding rotation angle $\a$ in the operation $e^{i\a\s_{n_C}}$   are listed in TABLE \ref{table:coefficients2}. For the first set of POVM operators with parameter $\l_1=\frac{\pi}{4}$, Bob can realize his desired operation with the probability of 50\% and the target operation that Bob can choose is $e^{i\a\s_{n_C}}$ with $\a=\frac{n\pi}{4}$ with  $n=1,3,5,7$. By the second set of operators with $\l_1\in(0,\frac{\pi}{4})\cup(\frac{\pi}{4},\frac{\pi}{2})$, the operation $e^{i\a\s_{n_C}}$ with $\a\in\cup_{m=0}^7(\frac{m\pi}{4},\frac{(m+1)\pi}{4})$ can be realized. If Bob wants to implement that operation, he informs Charlie to prepare the POVM operators with four parameters: $\l_1$, $\l_2=\frac{\pi}{2}-\l_1$, $\og_1=\frac{\pi}{2},\og_2=\frac{3\pi}{2}$, where 
	\begin{eqnarray}
		\l_1=\begin{cases}
			\a &\mbox{for $\a\in(0,\frac{\pi}{2})$,}\\
			\pi-\a &\mbox{for $\a\in(\frac{\pi}{2},\pi)$,}\\
			\frac{3\pi}{2}-\a &\mbox{for $\a\in(\pi,\frac{3\pi}{2})$,}\\
			\a-\frac{3\pi}{2} &\mbox{for $\a\in(\frac{3\pi}{2},2\pi)$.}\\
		\end{cases}
	\end{eqnarray}
	\begin{table}[h]
	\renewcommand\arraystretch{1.25}
	\caption{The possible parameters $\t_j, \varphi_j, \l_k, \og_k$ in Bob and Charlie's POVM operators and its corresponding rotation angle $\alpha$ in operations $e^{i\a\s_{n_C}}$ that can be realized.  The first column contains the choice of POVM operators $(M_k, N_k)$, for $k=1,2$.}
	\label{table:coefficients2}
	\begin{tabular}{|p{0.8cm}<{\centering}|p{0.8cm}<{\centering}|p{0.8cm}<{\centering}|p{0.8cm}<{\centering}|p{1cm}<{\centering}|p{1cm}<{\centering}|p{0.8cm}<{\centering}|p{0.8cm}<{\centering}|c|c|c|c|c|c|}
		\hline
		\multicolumn{4}{|c|}{\makecell[c]{parameters in Bob's\\POVM operators $M_1,M_2$ }}& \multicolumn{4}{c|}{\makecell[c]{parameters in Charlie's\\POVM operators $N_1,N_2$}}&\makecell[c]{success\\ rate}  &\makecell[c]{POVM\\ operators} & \multicolumn{3}{c|}{coefficients in (\ref{def:T32})}  &\makecell[c]{rotation angle\\ in $e^{i\a\s_{n_C}}$}\\
		\hline
		$\t_1$&$\t_2$&$\varphi_1$&$\varphi_2$&$\l_1$&$\l_2$&$\og_1$&$\og_2$&$p$& \makecell[c]{$M_1,N_1$ \\or $M_2,N_2$}&$K$&$c_{0,1}$&$c_{1,0}$&$\a$  \\
		\hline
		\multirow{4}{*}{$\frac{\pi}{4}$}   & \multirow{4}{*}{$\frac{\pi}{4}$ }& \multirow{4}{*}{0} &\multirow{4}{*}{ $\pi$} &\multirow{4}{*}{$\frac{\pi}{4}$}   &\multirow{4}{*}{$\frac{\pi}{4}$} &\multirow{4}{*}{ $\frac{\pi}{2}$}&  \multirow{4}{*}{ $\frac{3\pi}{2}$}&\multirow{4}{*}{50\%}&$M_1,N_1$&1&$-\frac{1}{2}i$&$\frac{1}{2}$ &   $\frac{3\pi}{4}$ or $\frac{7\pi}{4}$   \\
		\cline{10-14}
		&&   & & &  &  &  &  &$M_1,N_2$&1 &$\frac{1}{2}i$&$\frac{1}{2}$&   $\frac{\pi}{4}$ or $\frac{5\pi}{4}$     \\
		\cline{10-14}
		&&   & & &  &  &  &  &$M_2,N_1$&-1 &$-\frac{1}{2}i$&$-\frac{1}{2}$&   $\frac{\pi}{4}$ or $\frac{5\pi}{4}$     \\
		\cline{10-14}
		&&   & & &  &  &  &  &$M_2,N_2$&-1 &$\frac{1}{2}i$&$-\frac{1}{2}$&   $\frac{3\pi}{4}$ or $\frac{7\pi}{4}$     \\
		\hline
		\multirow{4}{*}{$\frac{\pi}{4}$}   & \multirow{4}{*}{$\frac{\pi}{4}$ }& \multirow{4}{*}{0} &\multirow{4}{*}{ $\pi$} &\multirow{4}{*}{\makecell[c]{$(0,\frac{\pi}{4})\cup$\\$(\frac{\pi}{4},\frac{\pi}{2})$}}   &\multirow{4}{*}{$\frac{\pi}{2}-\l_1$} &\multirow{4}{*}{ $\frac{\pi}{2}$}&  \multirow{4}{*}{ $\frac{3\pi}{2}$}&\multirow{4}{*}{25\%}&$M_1,N_1$&1&$-\frac{\sqrt{2}}{2}i\sin\l_1$&$\frac{\sqrt{2}}{2}\cos\l_1$ &  $\pi-\l_1$ or $2\pi-\l_1$   \\
		\cline{10-14}
		& &   & & &  &  &  &  & $M_1,N_2$&1&$\frac{\sqrt{2}}{2}i\sin\l_2$&$\frac{\sqrt{2}}{2}\cos\l_2$ &  $\frac{3\pi}{2}-\l_1$ or $\frac{\pi}{2}-\l_1$ \\
		\cline{10-14}
		& &   & & &  &  &  &  & $M_2,N_1$&-1&$-\frac{\sqrt{2}}{2}i\sin\l_1$&$-\frac{\sqrt{2}}{2}\cos\l_1$ &  $\l_1$ or $\l_1+\pi$   \\
		\cline{10-14}
		& &   & & &  &  &  &  & $M_2,N_2$&-1&$\frac{\sqrt{2}}{2}i\sin\l_2$&$-\frac{\sqrt{2}}{2}\cos\l_2$ & $\l_1+\frac{3\pi}{2}$ or $\l_1+\frac{\pi}{2}$   \\
		\hline
	\end{tabular}
\end{table}

The operation $e^{i\a\s_{n_C}}$ with $\a\in(0,\frac{\pi}{2}), (\frac{\pi}{2},\frac{\pi}{2}), (\pi,\frac{3\pi}{2}), (\frac{3\pi}{2},2\pi)$ can be realized only  when the operators $(M_2,N_1), (M_1,N_1), (M_1,N_2), (M_2,N_2)$ occur in the measurement, respectively. From (\ref{def:probability}), each of the combination of POVM operators occurs with the probability of 25\%. So the success rate is equal to 25\%.

	To conclude, we have shown that how  Bob and Charlie implement the remote operation $e^{i\a\s_{n_C}}$ in the POVM measurement scheme without the permission of controller Alice. They can realize the operation of
	$\a\in\{0,\frac{\pi}{4},\frac{\pi}{2},\frac{3\pi}{4},\pi,\frac{5\pi}{4},\frac{3\pi}{2},\frac{7\pi}{4}\}$ with the  success rate of 50\%, and the operation corresponding to $\a\in\bigcup_{m=0}^7(\frac{m\pi}{4},\frac{(m+1)\pi}{4})$ can be realized with the success rate of 25\%.
\end{proof}

\section*{DATA AVAILABILITY}
All relevant data supporting the main conclusions and figures of the document are
available upon reasonable request. Please refer to Xinyu Qiu at xinyuqiu@buaa.edu.cn.

\bibliographystyle{unsrt}
\bibliography{graph_state}
\end{document}